\documentclass[]{interact}

\usepackage{epstopdf}
\usepackage[caption=false]{subfig}
\usepackage{multirow}
\usepackage{mathrsfs}
\usepackage{xcolor}
\usepackage{textcomp}
\usepackage{manyfoot}
\usepackage{algorithm}
\usepackage{algorithmicx}
\usepackage{algpseudocode}
\usepackage{float}
\usepackage{listings}
\usepackage{url}
\usepackage{natbib}
\bibpunct[, ]{(}{)}{;}{a}{,}{,}

\usepackage[hidelinks]{hyperref}
\usepackage{newunicodechar}
\newunicodechar{—}{---}

\theoremstyle{plain}
\newtheorem{theorem}{Theorem}

\newtheorem{corollary}{Corollary}
\newtheorem{lemma}{Lemma}

\theoremstyle{definition}
\newtheorem{assumption}{Assumption}

\theoremstyle{remark}
\newtheorem{remark}{Remark}

\providecommand{\bmhead}[1]{\subsection*{#1}}

\begin{document}

\articletype{ARTICLE}

\title{Optimal subsampling approach for Cox models under large-scale length-biased right-censored data}

\author{
\name{Zhicheng Feng\textsuperscript{a}\thanks{Email: zhichengfeng0@gmail.com}, Panpan Ba\textsuperscript{a}\thanks{Email: bpp0218@163.com}, Jianhua Shi\textsuperscript{a,b,c}\thanks{CONTACT Jianhua Shi. Email: v0085@126.com}, and Lin Ouyang\textsuperscript{d,e}\thanks{Email: ddcqzg@126.com}}
\affil{\textsuperscript{a}School of Mathematics and Statistics, Minnan Normal University, Zhangzhou, Fujian 363000, China; \textsuperscript{b}Fujian Key Laboratory of Granular Computing and Applications, Zhangzhou 363000, China; \textsuperscript{c}Fujian Key Laboratory of Data Science and Statistics, Zhangzhou 363000, China; \textsuperscript{d}Department of Radiology, The 909th Hospital, Zhangzhou 363020, China; \textsuperscript{e}Institute of Medical Imaging, Medical College, Xiamen University, Xiamen 361000, China}
}

\maketitle

\begin{abstract}
With the rapid advancement of data acquisition technologies, massive survival data have become increasingly prevalent, posing substantial computational challenges for the standard Cox proportional hazards model under complex censoring and truncation mechanisms. This paper addresses the analysis of large-scale length-biased right-censored data within the Cox model framework. We propose a novel subsampling approach to enhance computational efficiency for large samples and establish theoretically that the resulting subsample estimator provides an efficient approximation to the full-data estimator, while preserving consistency and asymptotic normality. Explicit optimal subsampling probabilities are derived under A- and L-optimality criteria. To facilitate practical implementation, we develop an efficient two-step estimation algorithm. Extensive numerical simulations and an empirical analysis of U.S. airline flight operation data evaluate the estimation performance and computational gains under various censoring rates and sample sizes. The results demonstrate that the proposed method substantially reduces computational cost while preserving statistical accuracy, offering a practical and scalable solution for analyzing large-scale length-biased right-censored survival data.
\end{abstract}

\begin{keywords}
large-scale data; Cox proportional hazards model; length-biased right-censored data; subsampling method
\end{keywords}

\section{Introduction}\label{sec1}

Survival analysis provides a fundamental framework for modeling time-to-event outcomes and assessing associated risk factors, with broad applications across biomedicine, engineering, and economics. Among existing models, the Cox proportional hazards model \citep{cox1972} has emerged as a benchmark due to its semiparametric structure, which enables estimation of covariate effects without specifying the baseline hazard function. Its theoretical tractability and interpretability have made it a standard tool in modern survival analysis.

However, the rapid proliferation of large-scale biomedical data--including electronic health records, genomic profiles, and disease registries--has led to a dramatic increase in sample sizes and covariate dimensions. While such rich data offer unprecedented opportunities for discovery, they also pose substantial computational challenges for conventional full-data estimation methods. In particular, under massive sample sizes, limitations in memory and computational resources often render standard algorithms prohibitively time-consuming or even infeasible \citep{wang2018optimal,tarkhan2020bigsurvsgd}.

Meanwhile, in many observational studies and disease screening programs, survival data often exhibit a more complex structure known as length-biased right censoring. This phenomenon commonly arises when the sampling mechanism is inherently linked to disease duration, as occurs in cross-sectional surveys or prevalent cohort studies \citep{vardi1989multiplicative, wang1991nonparametric}. Unlike traditional incident cohorts, a prevalent cohort consists of individuals who have already experienced the initiating event (e.g., disease onset) but have not yet experienced the failure event (e.g., death) at the time of recruitment. Consequently, individuals with longer survival durations are more likely to be included in the sample than those with shorter durations, inducing a systematic bias between the observed sample distribution and the target population distribution. For recent related research on length-biased and right-censored data, one may refer to \cite{chen2015monotone}, \cite{shi2022nonparametric}, and \cite{qiu2022reweighting}. Ignoring this length bias and directly applying the standard Cox model leads to inconsistent estimation and inefficient statistical inference. To address this issue, bias correction methods have been developed. For example,  \cite{luo2009nonparametric} proposed an approach based on inverse probability weighting, while \cite{qin2010statistical} introduced a conditional likelihood method. However, these corrections typically increase the complexity of the likelihood function and the variance structure of the resulting estimators. In the context of massive data, a major challenge for contemporary statistical inference is to effectively correct length bias while maintaining computational feasibility and scalability for large sample sizes.

For overcoming the computational bottlenecks posed by massive data, two main classes of methods have been proposed: divide-and-conquer algorithms \citep{huang2019distributed,chen2022first} and subsampling algorithms. Among these, subsampling approaches construct estimators by selecting a small, information-rich subset from the full data, thereby substantially reducing computational cost while preserving estimation accuracy. This paradigm has attracted considerable attention in recent years. For example, \cite{ma2015} proposed a sketching-based subsampling method using leverage scores for large-scale linear regression. \cite{wang2018optimal} studied optimal subsampling for logistic regression under massive sample sizes. \cite{ai2021optimal} developed optimal subsampling methods for generalized linear models under the A-optimality criterion. In the field of survival analysis, \cite{zuo2021sampling} first extended optimal subsampling to the additive risk model. For the Cox proportional hazards model, \cite{keret2023analyzing} investigated optimal subsampling tailored for rare-event settings. Subsequently, \cite{qiao2024optimal} and \cite{zhang2024approximating} established more general optimal subsampling frameworks. More recently, \cite{Li2025} proposed a Poisson subsampling approach to better accommodate large-scale datasets. However, existing work on optimal subsampling for the Cox model has primarily targeted standard right-censored data and cannot be directly migrated to length-biased data. The reason is that the bias-corrected weighting structure fundamentally alters the asymptotic variance of the parameter estimator, rendering previously optimal sampling probabilities suboptimal--or even leading to efficiency loss--under length-biased right-censored data.

In this paper, we propose a novel optimal subsampling strategy tailored for large-scale length-biased right-censored data under the Cox proportional hazards model. The consistency and asymptotic normality of the resulting subsample estimator are established. Through theoretical proof, numerical simulations, and a real data application, we demonstrate the performance advantages of the proposed method and its practical utility.

The remainder of this paper is organized as follows. Section 2 presents the model specification for the length-biased right-censored data and describes the full-data estimation method. We then propose a general subsampling estimator, analyze its large-sample properties, derive the optimal subsampling probabilities, and detail the implementation steps of a two-step sampling algorithm. Numerical simulations evaluating the finite-sample performance of the proposed method are presented in Section 3. Section 4 illustrates the method using a real dataset. Section 5 concludes with a brief discussion. All proofs are provided in the Appendix.

\section{Data and models}\label{sec2}
\subsection{Length-biased right-censored data}
Let $\tilde{T}$ denote the true survival time from the initial event to the failure event, and $\tilde{A}$ the truncation time from the initial event to the sampling time. Under the length-biased sampling mechanism, an individual can be included in the study only if they have not experienced the failure event at the sampling time, i.e., $\tilde{T} \geq \tilde{A}$. To simplify the notation and emphasise the truncation mechanism, we drop the superscript $\tilde{}$ and use $A$ and $T$ to denote the truncation time and survival time of the observed sample, respectively. Then the joint distribution of $(A, T)$ is the same as the conditional distribution of $(\tilde{A}, \tilde{T})$ given $\tilde{T} \geq \tilde{A}$. Let $V = T - A$ be the residual survival time measured from the sampling time, and let $C$ be the residual censoring time. Accordingly, the actually observed time is $Y = A + \min(V, C)$, and the failure indicator is defined as $\delta = I(V \leq C)$. Finally, the observed dataset consists of $n$ independent and identically distributed (i.i.d.) samples, denoted as $\mathcal{D}_n = \{(A_i, Y_i, \delta_i, \boldsymbol{X}_i); i = 1, \dots, n\}$, where $\boldsymbol{X}_i$ is a $p$-dimensional covariate vector.

\begin{figure}[htbp]
	\centering
	\includegraphics[trim=0 4cm 0 0cm, clip, width=1
    \textwidth]{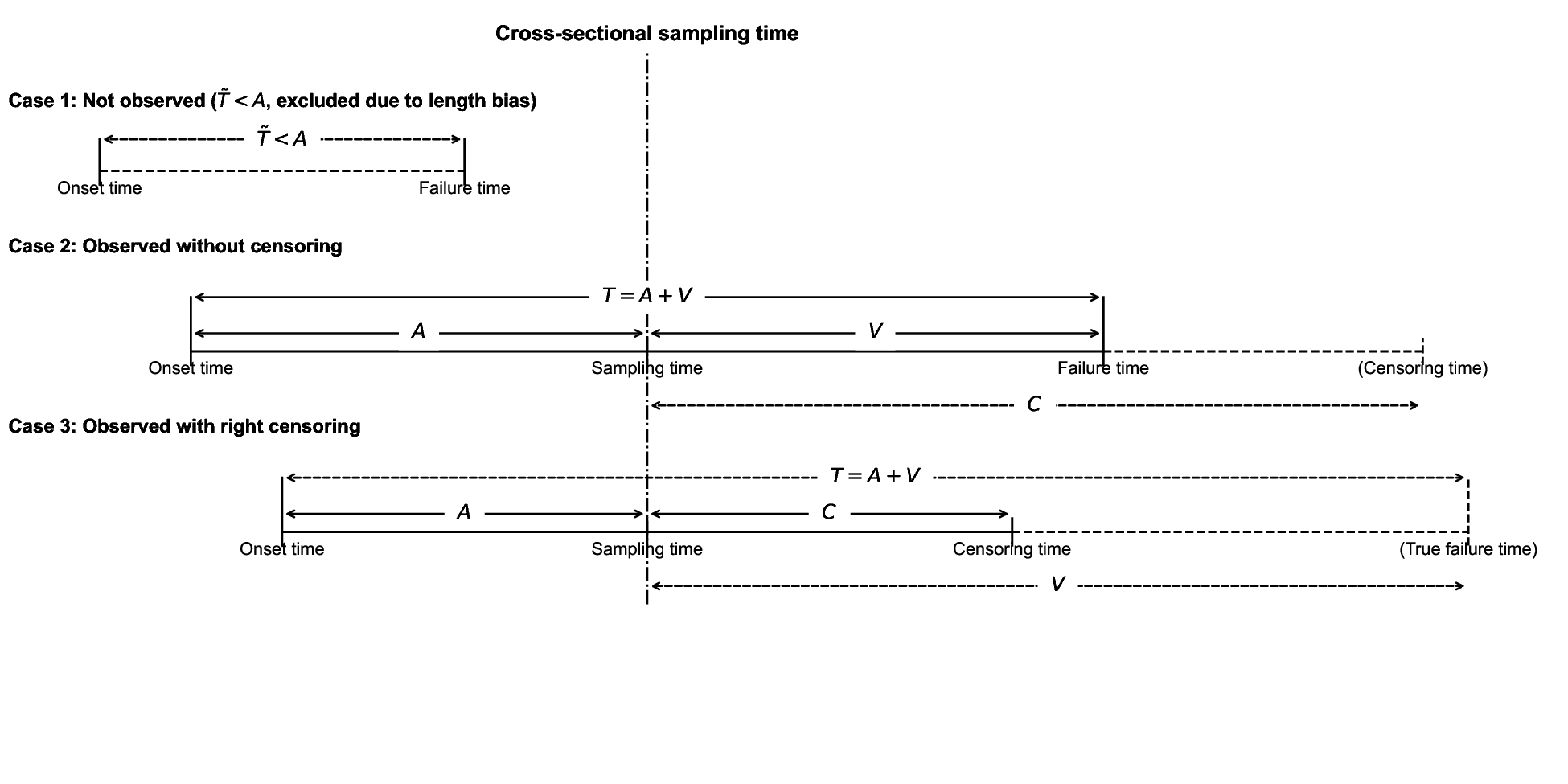}
	\caption{Schematic diagram of length-biased right-censored data.\label{f1}}
\end{figure}
Assuming that $Y_i$ is independent of $C_i$ given $\boldsymbol{X}_i$, and that the censoring process is independent of the covariates, the conditional hazard function at time $t$ for $i$-th individual can be formulated under the Cox proportional hazards model  as
\begin{eqnarray}\label{eq0}
    \lambda(t|\boldsymbol{X}_i)=\lambda_0(t)\exp\left(\boldsymbol{\beta}^T\boldsymbol{X}_i\right),
\end{eqnarray}
where $\lambda_0(t)$ denotes an unknown baseline hazard function, and $\boldsymbol{\beta}$ is a $p$-dimensional vector of regression parameters, which can be estimated by the  partial  likelihood  function of model (\ref{eq0}).

The classical Cox model assumes that the observed sample is obtained via simple random sampling from the target population. However, this assumption does not hold in the presence of length-biased right-censored data. This is because the probability of observation is proportional to the survival time, implying that individuals with longer survival durations are more likely to be included in the sample, thereby inducing selection bias. This bias leads to a distortion of the risk set and renders the conventional partial likelihood estimator inconsistent. To correct for this bias, a common approach is to construct weighted estimating equations. \cite{qin2010statistical} introduced two alternative weighting approaches. The EE-I method assigns weights of the form $\{Y_i S_c(Y_i - A_i)\}^{-1}$, whereas EE-II employs the inverse of the integrated survival function, given by $\{\int_0^{Y_i} S_c(t)\,dt\}^{-1}$. A key drawback of EE-I lies in its susceptibility to extreme weights as $S_c(\cdot)$ approaches zero. In contrast, the integral-based formulation of EE-II mitigates such instability in the tail region. As a consequence, EE-II yields more stable and efficient estimation of the regression parameters $\boldsymbol{\beta}$, particularly in large-scale survival settings. Based on the above considerations, we adopt the EE-II framework defined as
\begin{eqnarray}\label{eq1}
\mathbf{U}_{II}(\boldsymbol{\beta})
=\sum_{i=1}^{n}\delta_i\left[\mathbf{X}_i-\frac{\sum_{j=1}^{n}I(Y_j\ge Y_i)\delta_j W_j\mathbf{X}_j
\exp\left(\boldsymbol{\beta}^{\top}\mathbf{X}_j\right)}{
\sum_{j=1}^{n}I(Y_j\ge Y_i)\delta_j W_j
\exp\left(\boldsymbol{\beta}^{\top}\mathbf{X}_j\right)}\right]=0 ,
\end{eqnarray}
where $
W_j = \{w_c(Y_j)\}^{-1},$
and
$w_c(y) = \int_0^y S_c(t) \, dt,$
and $S_c(t)$ denotes the survival function of the censoring variable $C$.

Equivalently, the parameter $\boldsymbol{\beta}$ can be estimated by solving the following estimating equation
\begin{equation}\label{eq2}
    \mathbf{U}_{II}(\boldsymbol{\beta})= \sum_{i=1}^{n} \int_{0}^{\tau} \bigl\{ \mathbf{X}_{i} - \bar{\mathbf{X}}_{w}(t, \boldsymbol{\beta}) \bigr\} \, dN_{i}(t),
\end{equation}
where 
\begin{equation}
    \overline{\mathbf{X}}_w(t, \boldsymbol{\beta}) = \frac{S^{(1)}(t, \boldsymbol{\beta})}{S^{(0)}(t, \boldsymbol{\beta})},
\end{equation}
and
\begin{equation}
S^{(l)}(t, \boldsymbol{\beta}) = n^{-1} \sum_{i=1}^n w_c(t) R_i(t) W_i \mathbf{X}_i^{\otimes l} \exp(\boldsymbol{\beta}^\top \mathbf{X}_i),
 l = 0, 1 \text{ and } 2,
\end{equation}
with $R_i(t) = I({Y}_i \geq t)\delta_i, 
N_i(t) = I({Y}_i \leq t, C_i \geq Y_i - A_i). $
Here, $\boldsymbol{u}^{\otimes 0}=1$,$\boldsymbol{u}^{\otimes 1}=\boldsymbol{u}$, and $\boldsymbol{u}^{\otimes 2}=\boldsymbol{u}\boldsymbol{u}^\top$ for any vector $\boldsymbol{u}$.

\subsection{Asymptotic properties and subsampling}
In massive data analysis, the sample size frequently far exceeds available computational resources, rendering conventional full-sample estimation methods infeasible for direct implementation. The objective of this study is to construct a subsampling-based estimator that substantially reduces computational cost while providing an efficient approximation to the full-data estimator. Specifically, given prespecified sampling probabilities $\{\pi_i: i=1, \dots, n\}$ of one objective population, a random subsample of size $r$ is firstly drawn with replacement from the full data sample. The resulting subsample is denoted by $\mathcal{D}_{n}^{*}=\{(A_{i}^{*}, Y_{i}^{*}, \delta_{i}^{*}, \mathbf{X}_{i}^{*}, \pi_{i}^{*}); i=1, \dots, r\}$, where $\pi_{i}^{*}$ denotes the associated subsampling probability. Based on the subsampled data, we incorporate inverse probability weighting within the EE-II framework to construct a weighted estimating equation $\mathbf{U}^*(\boldsymbol{\beta})=0$, 
where
\begin{equation}\label{eq3}
 \mathbf{U}^*(\boldsymbol{\beta})
= \frac{1}{nr} \sum_{i=1}^{r} \frac{1}{\pi_i^*} \int_{0}^{\tau} \bigl\{ \mathbf{X}_i^* - \overline{\mathbf{X}}_w^*(t, \boldsymbol{\beta}) \bigr\} \, dN_i^*(t),
\end{equation}
with
\begin{equation}
\overline{\mathbf{X}}_w^*(t, \boldsymbol{\beta})
= \frac{S^{*(1)}(t, \boldsymbol{\beta})}{S^{*(0)}(t, \boldsymbol{\beta})},
\end{equation}
and
\begin{equation}
S^{*(l)}(t, \boldsymbol{\beta})
= \frac{1}{nr} \sum_{i=1}^{r} \frac{1}{\pi_i^*} w_c^*(t) R_i^*(t) W_i^* (\mathbf{X}_i^*)^{\otimes l} \exp(\boldsymbol{\beta}^\top \mathbf{X}_i^*), l = 0,1  \text{ and }  2,
\end{equation}
where
$
R_i^*(t) = I(Y_i^* \geq t)\delta_i^*,
N_i^*(t) = I(Y_i^* \leq t,\, C_i^* \geq Y_i^* - A_i^*),
W_i^* = \{w_c^*(Y_i^*)\}^{-1},
$
and $w_c^*(y) = \int_0^y S_c^*(t)\,dt.$ The subsampling estimator $\hat{\boldsymbol{\beta}}_{\text{sub}}$ is obtained by solving the weighted estimating equation $\mathbf{U}^*(\boldsymbol{\beta}) = 0$.

The asymptotic properties of the subsampling estimator $\hat{\boldsymbol{\beta}}_{sub}$ are established under the following conditions.
\begin{assumption}\label{ass1}
The baseline hazard function satisfies $\int_0^\tau \lambda_0(t)\,dt < \infty$ and $P(T_i \geq \tau) > 0$, where $\tau > 0$ is a given constant.
\end{assumption}

\begin{assumption}\label{ass2}
The true parameter $\boldsymbol{\beta}_0$ is an interior point of $\Theta$, a compact parameter space. Moreover, for any $\boldsymbol{\beta} \in \Theta$,
\[
\frac{1}{n}\sum_{i=1}^n \int_0^\tau \left[ \frac{S^{(2)}(t,\boldsymbol{\beta})}{S^{(0)}(t,\boldsymbol{\beta})}
- \left\{ \frac{S^{(1)}(t,\boldsymbol{\beta})}{S^{(0)}(t,\boldsymbol{\beta})} \right\}^{\otimes 2} \right] dN_i(t)
\]
converges in probability to a positive definite matrix.
\end{assumption}

\begin{assumption}\label{ass3}
The covariate vectors $\mathbf{X}_i$ are bounded.
\end{assumption}

\begin{assumption}\label{ass4}
The subsampling probabilities $\pi_i's$ satisfy $\max_{1 \le i \le n}(n\pi_i)^{-1} = O_P(1)$.
\end{assumption}

Assumptions \ref{ass1}-\ref{ass3} are standard regularity conditions for the large-sample theory of the Cox model under the full data, ensuring desirable statistical properties. Assumption \ref{ass4} is specific to the subsampling framework and is imposed to control the instability of the weighted estimation. Collectively, these assumptions guarantee the consistency and asymptotic normality of the subsampling estimator $\hat{\boldsymbol{\beta}}_{sub}$.

\begin{theorem}\label{th1}
If Assumptions \ref{ass1}-\ref{ass4} hold, then as $r, n \to \infty$, we have
\begin{equation}\label{eq4}
   \sqrt{r} \bigl( \hat{\boldsymbol{\beta}}_{sub} - \hat{\boldsymbol{\beta}}_{full} \bigr)
    \xrightarrow{d} N(0, \Sigma),
\end{equation}
where $\xrightarrow{d}$ denotes convergence in distribution, $\Sigma = \Psi^{-1} \Gamma \Psi^{-1}$
with
\begin{equation}\label{eq5}
    \Psi=\frac{1}{n}\sum_{i=1}^{n}\int_{0}^{\tau}\left[\frac{S^{(2)}(t,\hat{\boldsymbol{\beta}}_{full})}{S^{(0)}(t,\hat{\boldsymbol{\beta}}_{full})}-\left\{\frac{S^{(1)}(t,\hat{\boldsymbol{\beta}}_{full})}{S^{(0)}(t,\hat{\boldsymbol{\beta}}_{full})}\right\}^{\otimes2}\right]dN_{i}(t),
\end{equation}
and
\begin{equation}\label{eq6}
\Gamma = \frac{1}{n^2} \sum_{i=1}^{n} \frac{1}{\pi_i} \left[ \int_{0}^{\tau} \bigl( \mathbf{X}_i - \overline{\mathbf{X}}_w(t, \hat{\boldsymbol{\beta}}_{full}) \bigr) dN_i(t) \right]^{\otimes 2}.
\end{equation}
\end{theorem}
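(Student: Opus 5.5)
We follow the standard route for optimal-subsampling results in Cox-type models: work conditionally on the full data $\mathcal{D}_n$, show consistency first, and then linearize the subsample estimating equation around $\hat{\boldsymbol{\beta}}_{full}$. Throughout we treat $w_c$ and $W_i$ as known, or as a full-data plug-in estimate that is held fixed given $\mathcal{D}_n$, so that $w_c^*$ and $W_i^*$ are simply the corresponding values attached to the drawn observations. The scaling is chosen so that $\mathbf{U}^*$ and $S^{*(l)}$ are, conditionally on $\mathcal{D}_n$, unbiased for their full-data counterparts. For any fixed $t,\boldsymbol{\beta}$ we have $E\{S^{*(l)}(t,\boldsymbol{\beta})\mid\mathcal{D}_n\}=S^{(l)}(t,\boldsymbol{\beta})$. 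The subsample score also satisfies $E[\frac{1}{nr}\sum_i \frac{1}{\pi_i^*}\int\{\mathbf{X}_i^*-\overline{\mathbf{X}}_w(t,\boldsymbol{\beta})\}dN_i^*(t)\mid\mathcal{D}_n]=n^{-1}\mathbf{U}_{II}(\boldsymbol{\beta})$.

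\textbf{Step 1: uniform approximation of the weighted averages.} We bound the conditional variance of $S^{*(l)}(t,\boldsymbol{\beta})-S^{(l)}(t,\boldsymbol{\beta})$ by
\[
\frac{1}{r}\sum_{i=1}^n \frac{1}{n^2\pi_i}\,\|w_c(t)R_i(t)W_i\mathbf{X}_i^{\otimes l}e^{\boldsymbol{\beta}^\top\mathbf{X}_i}\|^2 \le \frac{1}{r}\max_i (n\pi_i)^{-1}\,O(1)=O_P(r^{-1}).
\]
This uses Assumption \ref{ass3}, compactness of $\Theta$, the bound $w_c(t)W_i\le 1$ for $t\le Y_i$ (since $w_c$ is nondecreasing), and Assumption \ref{ass4}. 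Uniformity over $t\in[0,\tau]$ and over $\boldsymbol{\beta}$ in a neighbourhood follows because each summand is monotone in $t$ up to bounded factors and Lipschitz in $\boldsymbol{\beta}$. A bracketing/Glivenko--Cantelli argument for the with-replacement empirical process then gives $\sup_{t,\boldsymbol{\beta}}\|S^{*(l)}-S^{(l)}\|=O_{P|\mathcal{D}_n}(r^{-1/2})$. Assumptions \ref{ass1}--\ref{ass2} keep $S^{(0)}$ bounded away from zero on $[0,\tau]$, so $\sup\|\overline{\mathbf{X}}_w^*-\overline{\mathbf{X}}_w\|=O_{P|\mathcal{D}_n}(r^{-1/2})$ as well.

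\textbf{Step 2: consistency.} The map $\boldsymbol{\beta}\mapsto \mathbf{U}^*(\boldsymbol{\beta})$ is the gradient of a concave weighted log partial likelihood. Its negative Jacobian is $\Psi^*(\boldsymbol{\beta})=\frac{1}{nr}\sum_i\pi_i^{*-1}\int[S^{*(2)}/S^{*(0)}-(\overline{\mathbf{X}}_w^*)^{\otimes2}]dN_i^*$, which by Step 1 and Assumption \ref{ass2} converges uniformly to the corresponding full-data matrix and is positive definite. By Step 1, $\mathbf{U}^*(\boldsymbol{\beta})-n^{-1}\mathbf{U}_{II}(\boldsymbol{\beta})\to0$ uniformly, conditionally in probability. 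A convexity argument in the style of Andersen--Gill or Hjort--Pollard then yields $\hat{\boldsymbol{\beta}}_{sub}-\hat{\boldsymbol{\beta}}_{full}\to0$ in conditional probability, which implies the unconditional statement.

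\textbf{Step 3: linearization and CLT.} A Taylor expansion gives $0=\mathbf{U}^*(\hat{\boldsymbol{\beta}}_{sub})=\mathbf{U}^*(\hat{\boldsymbol{\beta}}_{full})-\Psi^*(\tilde{\boldsymbol{\beta}})(\hat{\boldsymbol{\beta}}_{sub}-\hat{\boldsymbol{\beta}}_{full})$. By Steps 1--2, $\Psi^*(\tilde{\boldsymbol{\beta}})=\Psi+o_P(1)$. Next decompose
\[
\mathbf{U}^*(\hat{\boldsymbol{\beta}}_{full})=\frac{1}{r}\sum_{i=1}^r \boldsymbol{\eta}_i^* +\frac{1}{nr}\sum_i\frac{1}{\pi_i^*}\int\{\overline{\mathbf{X}}_w-\overline{\mathbf{X}}_w^*\}(t,\hat{\boldsymbol{\beta}}_{full})\,dN_i^*(t),
\]
where $\boldsymbol{\eta}_i^*=(n\pi_i^*)^{-1}\int\{\mathbf{X}_i^*-\overline{\mathbf{X}}_w(t,\hat{\boldsymbol{\beta}}_{full})\}dN_i^*(t)$. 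Given $\mathcal{D}_n$, the $\boldsymbol{\eta}_i^*$ are i.i.d. with mean $n^{-1}\mathbf{U}_{II}(\hat{\boldsymbol{\beta}}_{full})=0$ and covariance exactly $\Gamma$. We verify the Lindeberg--Feller condition using $\|\boldsymbol{\eta}_i^*\|\le \max_i(n\pi_i)^{-1}O(1)=O_P(1)$. This gives $\sqrt{r}\,\Gamma^{-1/2}r^{-1}\sum\boldsymbol{\eta}_i^*\to N(0,I)$ conditionally. The second term is the main obstacle. Its naive size is $O_P(r^{-1/2})$, so it must be shown to be $o_P(r^{-1/2})$ rather than merely bounded. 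We will rewrite it as $\int\{\overline{\mathbf{X}}_w-\overline{\mathbf{X}}_w^*\}\,d\bar N^*(t)$ with $\bar N^*=\frac{1}{nr}\sum\pi_i^{*-1}N_i^*$, and replace $d\bar N^*$ by $d\bar N=n^{-1}\sum dN_i$ at cost $O_P(r^{-1/2})\cdot O_P(r^{-1/2})=O_P(r^{-1})$, using Step 1 and integration by parts. The remaining term $\int(\overline{\mathbf{X}}_w-\overline{\mathbf{X}}_w^*)\,d\bar N$ is a linear functional of $S^{*(0)}-S^{(0)}$ and $S^{*(1)}-S^{(1)}$ plus $O_P(r^{-1})$. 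This piece is a mean-zero average of order $r^{-1/2}$, not negligible by itself. So we will instead absorb it into $\boldsymbol{\eta}_i^*$, by redefining $\boldsymbol{\eta}_i^*$ with the compensator correction $-\int \{\mathbf{X}_i^*-\overline{\mathbf{X}}_w\}w_cR_i^*W_i^*e^{\hat{\boldsymbol{\beta}}^\top\mathbf{X}_i^*}/S^{(0)}\,d\bar N$. We will then check that, at the level of precision matching the statement's $\Gamma$, this correction is of smaller order or cancels in the martingale sense (i.e., the paper's $\Gamma$ arises when the term is treated as asymptotically negligible). Finally, Slutsky's theorem and conditional-to-unconditional convergence via dominated convergence of characteristic functions give $\sqrt r(\hat{\boldsymbol{\beta}}_{sub}-\hat{\boldsymbol{\beta}}_{full})\to N(0,\Psi^{-1}\Gamma\Psi^{-1})$.
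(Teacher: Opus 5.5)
Up to the linearization your argument is fine, and it follows the same route as the paper: condition on $\mathcal D_n$, prove consistency, Taylor-expand, and apply a CLT to the IPW average. You have also found exactly the right difficulty. By your own Step 1, the leading part of
$\frac{1}{nr}\sum_i\pi_i^{*-1}\int\{\overline{\mathbf X}_w-\overline{\mathbf X}_w^*\}(t,\hat{\boldsymbol\beta}_{full})\,dN_i^*(t)$
is a conditionally mean-zero average of order $r^{-1/2}$, and in general it is not smaller.

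The gap is your final step. You plan to verify that, after this term is absorbed into $\boldsymbol\eta_i^*$, the covariance is still $\Gamma$, because the correction ``is of smaller order or cancels in the martingale sense.'' That verification fails. Absorbing the term replaces $dN_i$ by the compensated increment
\[
d\hat M_i(t)=dN_i(t)-\frac{w_c(t)R_i(t)W_i\exp(\hat{\boldsymbol\beta}_{full}^{\top}\mathbf X_i)}{S^{(0)}(t,\hat{\boldsymbol\beta}_{full})}\,d\bar N(t),\qquad \bar N(t)=\frac{1}{n}\sum_{j=1}^n N_j(t).
\]
The conditional covariance is then $\Gamma^\dagger=n^{-2}\sum_{i=1}^n\pi_i^{-1}\bigl[\int_0^\tau\{\mathbf X_i-\overline{\mathbf X}_w(t,\hat{\boldsymbol\beta}_{full})\}\,d\hat M_i(t)\bigr]^{\otimes 2}$, and $\Gamma^\dagger-\Gamma$ does not vanish.

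The identity $E(\int H\,dM)^{\otimes2}=E(\int H\,dN)^{\otimes2}$ reconciles the two in the classical Cox model, but only unweighted and at the population level. It is unavailable here for two reasons:
\begin{itemize}
\item The EE-II risk factor $R_i(t)W_i=I(Y_i\ge t)\delta_i/w_c(Y_i)$ depends on the terminal values $(Y_i,\delta_i)$. Hence $\hat M_i$ is not a martingale in any filtration in which $R_i$ is predictable.
\item Even if it were, the data-dependent weights $1/\pi_i$ would break the identity.
\end{itemize}

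Concretely, take $p=1$ and no censoring, so $\delta_i\equiv1$, $w_c(t)=t$ and $W_i=1/Y_i$. Let $\boldsymbol\beta_0=0$, let $X$ be bounded with variance $\sigma^2$ and independent of $\tilde T$, and set $\lambda_0\equiv\theta$, $\pi_i=1/n$, with $\tau$ large. Then:
\begin{itemize}
\item $Y\sim\mathrm{Gamma}(2,\theta)$, $\overline{\mathbf X}_w\to EX$, and $d\bar N(t)/S^{(0)}(t)\to\theta\,dt$.
\item The compensated residual is $(X_i-EX)(1-\theta Y_i/2)$, whose mean square is $\sigma^2E(1-\theta Y/2)^2=\sigma^2/2$.
\item Meanwhile $\Gamma\to\sigma^2$ and $\Psi\to\sigma^2$.
\end{itemize}
So the true limiting variance is half the stated $\Sigma$, already under uniform sampling.

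The step therefore cannot be completed. Carried through honestly, your linearization proves the CLT with $\Gamma^\dagger$ in place of $\Gamma$, which would also change the optimal probabilities of Theorems 2--3. The paper's proof does not close this gap either. In \eqref{S3} it swaps $\overline{\mathbf X}_w^*$ for $\overline{\mathbf X}_w$ inside $\mathbf U^*(\hat{\boldsymbol\beta}_{full})$ and claims an $o_P(r^{-1/2})$ remainder. Its only justification is the uniform consistency \eqref{S1}, which gives $o_P(1)$, not the needed rate; your Step 1 shows the rate is exactly $r^{-1/2}$.

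The stated $\Gamma$ is correct for a different estimator, one whose subsample score keeps the full-data $\overline{\mathbf X}_w(t,\boldsymbol\beta)$. In that case the score at $\hat{\boldsymbol\beta}_{full}$ is exactly $\mathbb P_{r|\mathcal D_n}\psi_{\hat{\boldsymbol\beta}_{full}}$. It is not correct for $\hat{\boldsymbol\beta}_{sub}$ as defined through $\overline{\mathbf X}_w^*$.
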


Theorem \ref{th1} establishes the asymptotic normality of $\hat{\boldsymbol{\beta}}_{sub}$, whose asymptotic variance depends on the sampling probabilities $\pi_i$. To evaluate the performance of parameter estimation, we derive the optimal subsampling probabilities by minimizing the mean squared error (MSE) under the A-optimality criterion \citep{wang2018optimal}.
\begin{theorem}\label{th2}
    If the following sampling probabilities are adopted,
    \begin{equation}\label{eq7}
        \pi_i^{Aopt} = \frac{\left\| \Psi^{-1} \int_{0}^{\tau} \bigl\{ \mathbf{X}_i - \overline{\mathbf{X}}_w(t, \hat{\boldsymbol{\beta}}_{full}) \bigr\} \, dN_i(t) \right\|}
{\sum_{j=1}^{n} \left\| \Psi^{-1} \int_{0}^{\tau} \bigl\{ \mathbf{X}_j - \overline{\mathbf{X}}_w(t, \hat{\boldsymbol{\beta}}_{full}) \bigr\} \, dN_j(t) \right\|},
\quad i=1,2,\dots,n,
    \end{equation}
    then the asymptotic MSE of $\hat{\boldsymbol{\beta}}_{sub}$ attains its minimum.
\end{theorem}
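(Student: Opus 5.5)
The plan is to reduce the statement to a Cauchy--Schwarz minimization over the probability simplex, using Theorem \ref{th1}. By Theorem \ref{th1}, the asymptotic MSE of $\hat{\boldsymbol{\beta}}_{sub}$ (as an approximation to $\hat{\boldsymbol{\beta}}_{full}$, scaled by $r$) equals $\mathrm{tr}(\Sigma)=\mathrm{tr}(\Psi^{-1}\Gamma\Psi^{-1})$. Write
\[
\boldsymbol{g}_i=\int_0^\tau \bigl\{\mathbf{X}_i-\overline{\mathbf{X}}_w(t,\hat{\boldsymbol{\beta}}_{full})\bigr\}\,dN_i(t).
\]
Neither $\Psi$ nor the $\boldsymbol{g}_i$ depends on the $\pi_i$: they are functions of the full data and $\hat{\boldsymbol{\beta}}_{full}$ only. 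Only $\Gamma$ involves the sampling probabilities.

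First I would use linearity of the trace and $\mathrm{tr}(\boldsymbol{a}\boldsymbol{a}^\top)=\|\boldsymbol{a}\|^2$ to obtain
\[
\mathrm{tr}(\Sigma)=\frac{1}{n^2}\sum_{i=1}^n \frac{1}{\pi_i}\,\mathrm{tr}\bigl(\Psi^{-1}\boldsymbol{g}_i\boldsymbol{g}_i^\top\Psi^{-1}\bigr)=\frac{1}{n^2}\sum_{i=1}^n \frac{\|\Psi^{-1}\boldsymbol{g}_i\|^2}{\pi_i}.
\]
The problem thus becomes minimizing $\sum_i a_i^2/\pi_i$ with $a_i=\|\Psi^{-1}\boldsymbol{g}_i\|$, subject to $\pi_i>0$ and $\sum_i\pi_i=1$.

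Next I would apply the Cauchy--Schwarz inequality:
\[
\Bigl(\sum_{i=1}^n a_i\Bigr)^2=\Bigl(\sum_{i=1}^n \frac{a_i}{\sqrt{\pi_i}}\sqrt{\pi_i}\Bigr)^2\le \Bigl(\sum_{i=1}^n\frac{a_i^2}{\pi_i}\Bigr)\Bigl(\sum_{i=1}^n\pi_i\Bigr)=\sum_{i=1}^n\frac{a_i^2}{\pi_i}.
\]
Hence $\mathrm{tr}(\Sigma)\ge n^{-2}(\sum_i a_i)^2$, a lower bound free of $\pi$. Equality holds iff $a_i/\sqrt{\pi_i}\propto\sqrt{\pi_i}$, i.e.\ $\pi_i\propto a_i$. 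Normalizing gives exactly $\pi_i^{Aopt}$ in (\ref{eq7}), and substituting shows that the bound is attained. A Lagrange-multiplier argument would give the same result, with convexity of $1/\pi_i$ guaranteeing a global minimum.

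The main obstacle is admissibility rather than the algebra. Censored observations with $\delta_i=0$ have $\boldsymbol{g}_i=0$, so $\pi_i^{Aopt}=0$ for them. This makes $1/\pi_i$ undefined and conflicts with Assumption \ref{ass4}, which Theorem \ref{th1} needs.

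I would handle this by adopting the convention $0/0=0$ in $\Gamma$: those terms contribute nothing to $\Gamma$, so the minimization is carried out over the indices with $a_i>0$. To stay within Assumption \ref{ass4}, I would add a remark that $\max_i(n\pi_i^{Aopt})^{-1}=O_P(1)$ holds on the support under Assumption \ref{ass3}. The reason is that $n^{-1}\sum_j a_j=O_P(1)$ by boundedness of $\mathbf{X}_i$ and the convergence of $\Psi$ to a positive definite limit in Assumption \ref{ass2}, while each $a_i$ is bounded below on the support. If that lower bound fails, the remedy is to mix $\pi_i^{Aopt}$ with uniform probabilities. The key claim, that the A-optimal choice minimizes $\mathrm{tr}(\Sigma)$, follows directly from the Cauchy--Schwarz step above.
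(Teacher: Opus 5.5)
Your proposal is correct and follows the same route as the paper's proof. Both expand $\mathrm{tr}(\Psi^{-1}\Gamma\Psi^{-1})$ as $n^{-2}\sum_i \pi_i^{-1}\|\Psi^{-1}\boldsymbol{g}_i\|^2$ and apply Cauchy--Schwarz under $\sum_i\pi_i=1$, with equality iff $\pi_i\propto\|\Psi^{-1}\boldsymbol{g}_i\|$. Your admissibility caveat (zero probabilities where $\boldsymbol{g}_i=0$, which conflicts with Assumption \ref{ass4}) is a legitimate point the paper's proof ignores and only implicitly addresses later through $\pi_i^{rob}(\lambda)$. Keep the claimed lower bound on $a_i$ over the support hedged, as you did, since it is not guaranteed.
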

Notably, the expression of $\pi_i^{\mathrm{Aopt}}$ involves the matrix inverse $\Psi^{-1}$, which can incur a prohibitive computational burden in large-scale settings. To alleviate this issue, we adopt the modified optimality criterion proposed by \cite{wang2018optimal}, under which the modified optimal sampling probabilities are derived by minimizing the MSE of $\Psi \hat{\boldsymbol{\beta}}_{sub}$. This strategy effectively bypasses the need for calculating $\Psi^{-1}$, thereby significantly enhancing computational efficiency.
\begin{theorem}\label{th3}
    If the following sampling probabilities are adopted
    \begin{equation}\label{eq8}
        \pi_i^{Lopt} = \frac{\left\|\int_{0}^{\tau} \bigl\{ \mathbf{X}_i - \overline{\mathbf{X}}_w(t, \hat{\boldsymbol{\beta}}_{full}) \bigr\} \, dN_i(t) \right\|}
{\sum_{j=1}^{n} \left\|\int_{0}^{\tau} \bigl\{ \mathbf{X}_j - \overline{\mathbf{X}}_w(t, \hat{\boldsymbol{\beta}}_{full}) \bigr\} \, dN_j(t) \right\|},
\quad i=1,2,\dots,n,
    \end{equation}
    then the asymptotic MSE of $\Psi \hat{\boldsymbol{\beta}}_{sub}$ attains its minimum.
\end{theorem}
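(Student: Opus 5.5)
The plan is to take Theorem \ref{th1} as the starting point and minimize the trace of the asymptotic covariance of $\Psi\hat{\boldsymbol{\beta}}_{sub}$ over the probability simplex, using the Cauchy--Schwarz inequality.

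By Theorem \ref{th1}, $\sqrt{r}(\hat{\boldsymbol{\beta}}_{sub}-\hat{\boldsymbol{\beta}}_{full})$ is asymptotically $N(0,\Psi^{-1}\Gamma\Psi^{-1})$. Since $\Psi$ is a fixed matrix conditional on $\mathcal{D}_n$, the continuous mapping theorem gives
\[
\sqrt{r}\,\Psi(\hat{\boldsymbol{\beta}}_{sub}-\hat{\boldsymbol{\beta}}_{full})\xrightarrow{d}N(0,\Gamma).
\]
The asymptotic MSE of $\Psi\hat{\boldsymbol{\beta}}_{sub}$ around $\Psi\hat{\boldsymbol{\beta}}_{full}$ is therefore $r^{-1}\operatorname{tr}(\Gamma)$, and $r$ does not depend on the $\pi_i$. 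So the problem reduces to minimizing $\operatorname{tr}(\Gamma)$ over $\pi_i>0$ with $\sum_i\pi_i=1$.

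Write $\mathbf{g}_i=\int_0^\tau\{\mathbf{X}_i-\overline{\mathbf{X}}_w(t,\hat{\boldsymbol{\beta}}_{full})\}\,dN_i(t)$. Then $\operatorname{tr}\{(\mathbf{g}_i)^{\otimes2}\}=\|\mathbf{g}_i\|^2$, so
\[
\operatorname{tr}(\Gamma)=n^{-2}\sum_{i=1}^n\pi_i^{-1}\|\mathbf{g}_i\|^2.
\]
Because $\sum_i\pi_i=1$, the Cauchy--Schwarz inequality gives
\[
\sum_i\pi_i^{-1}\|\mathbf{g}_i\|^2=\Bigl(\sum_i\pi_i\Bigr)\Bigl(\sum_i\pi_i^{-1}\|\mathbf{g}_i\|^2\Bigr)\ge\Bigl(\sum_i\|\mathbf{g}_i\|\Bigr)^2.
\]
Equality holds if and only if $\sqrt{\pi_i}\propto\|\mathbf{g}_i\|/\sqrt{\pi_i}$, that is, $\pi_i\propto\|\mathbf{g}_i\|$. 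Normalizing gives exactly $\pi_i^{Lopt}$ in (\ref{eq8}), and the minimal value of $\operatorname{tr}(\Gamma)$ is $n^{-2}(\sum_i\|\mathbf{g}_i\|)^2$. The same argument applied to $\operatorname{tr}(\Psi^{-1}\Gamma\Psi^{-1})=n^{-2}\sum_i\pi_i^{-1}\|\Psi^{-1}\mathbf{g}_i\|^2$ would give Theorem \ref{th2}.

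The step needing the most care is making sure the optimization is legitimate within the framework of Theorem \ref{th1}.
\begin{itemize}
\item \emph{Zero probabilities.} Observations with $\delta_i=0$ have $dN_i\equiv0$, so $\mathbf{g}_i=0$ and $\pi_i^{Lopt}=0$. Assumption \ref{ass4} then fails literally. I would handle this by noting that these units contribute nothing to $\Gamma$, so the minimization is over units with $\mathbf{g}_i\neq0$ and terms with $\mathbf{g}_i=0$ are interpreted as $0/0=0$. Alternatively, one can mix with uniform probabilities, $(1-\alpha)\pi_i^{Lopt}+\alpha/n$, and let $\alpha\to0$, so that Assumption \ref{ass4} holds along the sequence.
\item \emph{Assumption \ref{ass4} on the units with $\mathbf{g}_i\neq0$.} Assumption \ref{ass3} bounds $\mathbf{X}_i$, and $\overline{\mathbf{X}}_w$ is a weighted average of the $\mathbf{X}_j$, so $\|\mathbf{g}_i\|$ is uniformly bounded. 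Hence $n^{-1}\sum_j\|\mathbf{g}_j\|=O_P(1)$, which bounds the small-$\pi_i$ side as far as it is needed.
\end{itemize}
With these points in place, the Cauchy--Schwarz step completes the proof.
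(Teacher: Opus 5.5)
Your proof is correct and follows the paper's argument: the paper likewise reduces the problem to minimizing $\operatorname{tr}(\Gamma)$ (carrying the factor $1/r$), applies Cauchy--Schwarz under $\sum_i\pi_i=1$, and reads off $\pi_i\propto\|\mathbf{g}_i\|$ from the equality condition. Your discussion of units with $\mathbf{g}_i=0$ and of Assumption \ref{ass4} goes beyond the paper, which does not address these points in this proof; note, however, that an $O_P(1)$ bound on $n^{-1}\sum_j\|\mathbf{g}_j\|$ does not by itself bound $(n\pi_i^{Lopt})^{-1}$ when some $\|\mathbf{g}_i\|$ are small but nonzero, so your uniform-mixing limit $\alpha\to0$ is the right way to justify that step.
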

The optimal subsampling probabilities in Theorems \ref{th2} and \ref{th3} both depend on the full-data estimator $\hat{\boldsymbol{\beta}}_{{full}}$. However, in practical applications, computing $\hat{\boldsymbol{\beta}}_{{full}}$ in advance is computationally prohibitive and contradicts the original motivation of subsampling techniques. To ensure algorithmic feasibility, a pilot estimator $\hat{\boldsymbol{\beta}}_{r_0}$ is commonly employed as a plug-in alternative to $\hat{\boldsymbol{\beta}}_{\mathrm{full}}$ for constructing the approximate optimal probabilities $\pi_i^{\mathrm{app}}$. Corresponding to different optimization criteria, the explicit forms of the resulting approximate probabilities are given as follows:
\begin{equation}
\pi_i^{\mathrm{app\text{-}Aopt}} =
\frac{\left\| \Psi^{-1} \int_0^\tau \bigl\{ \mathbf{X}_i - \overline{\mathbf{X}}_w^*(t, \hat{\boldsymbol{\beta}}_{r_0}) \bigr\} \, dN_i(t)\right\|}{\sum_{j=1}^n
\left\| \Psi^{-1} \int_0^\tau \bigl\{ \mathbf{X}_j - \overline{\mathbf{X}}_w^*(t, \hat{\boldsymbol{\beta}}_{r_0}) \bigr\} \, dN_j(t)\right\|},\quad i=1,2,\dots,n,
\end{equation}
and
\begin{equation}
    \pi_i^{\mathrm{app\text{-}Lopt}} = \frac{\left\|\int_0^\tau
\bigl\{ \mathbf{X}_i - \overline{\mathbf{X}}_w^*(t, \hat{\boldsymbol{\beta}}_{r_0}) \bigr\} \, dN_i(t)\right\|}{\sum_{j=1}^n \left\| \int_0^\tau
\bigl\{ \mathbf{X}_j - \overline{\mathbf{X}}_w^*(t, \hat{\boldsymbol{\beta}}_{r_0}) \bigr\} \, dN_j(t)\right\|},\quad i=1,2,\dots,n.
\end{equation}
\begin{remark}
Since $\bar{\mathbf{X}}_w(t, \hat{\boldsymbol{\beta}}_{r_0})$ depends on the full sample data, it entails a heavy computational burden in large-scale settings. To address this issue, we recommend replacing the full-sample term $\bar{\mathbf{X}}_w(t, \hat{\boldsymbol{\beta}}_{r_0})$ with its subsample counterpart $\bar{\mathbf{X}}_w^*(t, \hat{\boldsymbol{\beta}}_{r_0})$. This is a reasonable choice because it can be shown that     $\sup_{t\in[0,\tau],\boldsymbol{\beta}\in\Theta}\mid\overline{\mathbf{X}}_w^*(t,\boldsymbol{\beta})-\overline{\mathbf{X}}_w(t,\boldsymbol{\beta})\mid\xrightarrow{P|\mathcal{D}_n}0.$ This uniform consistency justifies the substitution at the consistency level and provides a computationally efficient approximation to the full-sample term.
\end{remark}

In practical computation, the approximate probabilities $\pi_i^{\mathrm{app}}$ for some observations tend to zero. When such observations are selected into the subsample, the corresponding weights are magnified and may dominate the estimator under extreme circumstances, leading to substantial variance inflation. To avoid this instability, we do not directly use the original sampling probabilities. Instead, following the idea of defensive importance sampling \citep{hesterberg1995weighted}, we mix the original sampling probabilities with the uniform distribution in a proportional manner. Specifically, for any given sampling probability $\pi_i$, we may define its robustified version as follows:
\begin{equation}
    \pi_i^{rob}(\lambda)=(1-\lambda)\pi_i+\frac{\lambda}{n},\quad\lambda\in(0,1).
\end{equation}
This construction guarantees a positive lower bound for $n\pi_i^{\mathrm{rob}}(\lambda)$, thereby avoiding extreme weights and enhancing the robustness of the subsampling estimation.

To facilitate practical application, the implementation details of the proposed robust two-step subsampling method are summarized in Algorithm 1.
\begin{algorithm}[H]
\caption{Robust Two-step Subsampling Algorithm}
\begin{algorithmic}[1]

\State \textbf{Input:} Full dataset \( \mathcal{D}_n = \{(A_i, Y_i, \delta_i, \boldsymbol{X}_i), i=1,\dots,n\} \), initial subsample size \( r_0 \).

\State \textbf{Step 1.} Perform simple random sampling from \( \mathcal{D}_n \) to obtain a subsample \( \mathcal{D}_{r_0} \) of size \( r_0 \). Based on this subsample, compute the initial estimator \( \hat{\boldsymbol{\beta}}_{r_0} \) as a candidate estimator of \( \hat{\boldsymbol{\beta}}_{{full}} \), and construct the approximate optimal subsampling probabilities
\[
\pi_i^{{app,rob}}(\lambda)=(1-\lambda)\pi_i^{{app}} + \frac{\lambda}{n}, \quad \lambda \in (0,1).
\]

\State \textbf{Step 2.} Perform sampling with replacement from \( \mathcal{D}_n \) using probabilities \( \{\pi_i^{{app,rob}}(\lambda)\}_{i=1}^n \) to obtain a subsample \( \mathcal{D}_r \) of size \( r \). Solve the following equation
\[
U^*(\boldsymbol{\beta})=\frac{1}{nr}\sum_{i=1}^r \frac{1}{\pi_i^{{app,rob*}}(\lambda)}\int_0^\tau \{\mathbf{X}_i^*-\bar{\mathbf{X}}_w^*(t,\boldsymbol{\beta})\}dN_i^*(t)=0.
\]

\State \textbf{Output:} Subsampling estimator \( \tilde{\boldsymbol{\beta}}_{sub} \).

\end{algorithmic}
\end{algorithm}
The following theorem gives the asymptotic normality of the estimator $\tilde{\boldsymbol{\beta}}_{sub}$ based on the mixture sampling probability.

\begin{theorem}\label{th4}
Under Assumptions \ref{ass1}-\ref{ass3} and $r=o(n)$, as $r_0\to\infty$, $r\to\infty$, and $n\to\infty$, given $\mathcal{D}_n$ and $\hat{\boldsymbol{\beta}}_{r_0}$, the two-step estimator $\tilde{\boldsymbol{\beta}}_{sub}$ from Algorithm 1 approximates the full-sample estimator at the rate $r^{-1/2}$. The approximation error satisfies the following asymptotic normal distribution:
\begin{equation}\label{eq9}
\sqrt{r} \bigl( \tilde{\boldsymbol{\beta}}_{sub} - \hat{\boldsymbol{\beta}}_{full} \bigr)
\xrightarrow{d} N(0, \tilde{\Sigma}),
\end{equation}
where $\tilde{\Sigma} = \Psi^{-1} \tilde{\Gamma} \Psi^{-1}$, with
\begin{equation}\label{eq10}
    \Psi=\frac{1}{n}\sum_{i=1}^{n}\int_{0}^{\tau}\left[\frac{S^{(2)}(t,\hat{\boldsymbol{\beta}}_{full})}{S^{(0)}(t,\hat{\boldsymbol{\beta}}_{full})}-\left\{\frac{S^{(1)}(t,\hat{\boldsymbol{\beta}}_{full})}{S^{(0)}(t,\hat{\boldsymbol{\beta}}_{full})}\right\}^{\otimes2}\right]dN_{i}(t),
\end{equation}
and
\begin{equation}\label{eq11}
\tilde{\Gamma} = \frac{1}{n^2} \sum_{i=1}^{n} \frac{1}{\pi_i^{app,rob}\left( \lambda  \right)} \left[ \int_{0}^{\tau} \bigl( \mathbf{X}_i - \overline{\mathbf{X}}_w(t, \hat{\boldsymbol{\beta}}_{full}) \bigr) dN_i(t) \right]^{\otimes 2}.
\end{equation}
\end{theorem}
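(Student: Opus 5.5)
The plan is to reduce Theorem \ref{th4} to Theorem \ref{th1}, applied conditionally. Fix $\mathcal{D}_n$ and the pilot estimate $\hat{\boldsymbol{\beta}}_{r_0}$. The second-step subsample is then drawn with replacement from $\mathcal{D}_n$ with \emph{fixed} probabilities $\pi_i^{app,rob}(\lambda)$. The pilot sample enters only through these probabilities, because the second-step estimating equation uses only the second-step draws. So $\tilde{\boldsymbol{\beta}}_{sub}$ is exactly the general subsample estimator $\hat{\boldsymbol{\beta}}_{sub}$ of Theorem \ref{th1} with $\pi_i=\pi_i^{app,rob}(\lambda)$, and it suffices to check Assumption \ref{ass4} for these probabilities. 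This is immediate from the defensive mixture: $n\pi_i^{app,rob}(\lambda)\ge\lambda$ for every $i$, so $\max_i\{n\pi_i^{app,rob}(\lambda)\}^{-1}\le\lambda^{-1}=O(1)$ deterministically. Assumptions \ref{ass1}--\ref{ass3} are assumed directly.

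Concretely, I would rerun the steps of the proof of Theorem \ref{th1} with these probabilities, keeping track of the conditioning. First, write $\mathbf{U}^*(\boldsymbol{\beta})$ as an average of $r$ i.i.d. (given $\mathcal{D}_n,\hat{\boldsymbol{\beta}}_{r_0}$) terms. Its conditional mean is $n^{-1}\mathbf{U}_{II}(\boldsymbol{\beta})$, up to replacing $\overline{\mathbf{X}}_w^*$ by $\overline{\mathbf{X}}_w$. The error from that replacement is controlled by the uniform convergence $\sup_{t,\boldsymbol{\beta}}|\overline{\mathbf{X}}_w^*-\overline{\mathbf{X}}_w|\to0$ noted in the Remark. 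That convergence is itself a conditional Chebyshev bound on $S^{*(l)}-S^{(l)}$ whose variance is at most $(r n^2)^{-1}\sum_i\pi_i^{-1}\|\cdot\|^2\le(\lambda r)^{-1}C$ by boundedness of $\mathbf{X}_i$ and of $W_i w_c(t)$ on $[0,\tau]$. Second, consistency of $\tilde{\boldsymbol{\beta}}_{sub}$ for $\hat{\boldsymbol{\beta}}_{full}$ follows from this uniform convergence together with the positive-definite limit in Assumption \ref{ass2} and compactness of $\Theta$. Third, a Taylor expansion of $\mathbf{U}^*$ around $\hat{\boldsymbol{\beta}}_{full}$ gives $\sqrt r(\tilde{\boldsymbol{\beta}}_{sub}-\hat{\boldsymbol{\beta}}_{full})=-\{\Psi+o_P(1)\}^{-1}\sqrt r\,\mathbf{U}^*(\hat{\boldsymbol{\beta}}_{full})$.

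Fourth, I would decompose $\mathbf{U}^*(\hat{\boldsymbol{\beta}}_{full})$ into two pieces. The leading piece is $\frac{1}{r}\sum_{k=1}^r\eta_k$ with $\eta_k=(n\pi_k^*)^{-1}\int_0^\tau\{\mathbf{X}_k^*-\overline{\mathbf{X}}_w(t,\hat{\boldsymbol{\beta}}_{full})\}dN_k^*(t)$. These $\eta_k$ have conditional mean $n^{-1}\mathbf{U}_{II}(\hat{\boldsymbol{\beta}}_{full})=0$ and conditional variance $\tilde\Gamma$. The remainder piece comes from $\overline{\mathbf{X}}_w^*-\overline{\mathbf{X}}_w$ and is $O_P(r^{-1/2})\cdot o_P(1)$. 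Apply the Lindeberg--Feller CLT conditionally to the leading piece. The Lindeberg condition reduces to bounding $r^{-1/2}\sum_i\pi_i\|\eta\|^{2+\epsilon}/\dots$, and this follows from $\|\eta\|\le C\lambda^{-1}$ uniformly (bounded covariates, at most one jump of $N_i$). Slutsky's lemma then gives (\ref{eq9}) with $\tilde\Sigma=\Psi^{-1}\tilde\Gamma\Psi^{-1}$.

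The main obstacle is the role of $\hat{\boldsymbol{\beta}}_{r_0}$. The probabilities depend on the pilot estimate, and the pilot estimate is built from $\overline{\mathbf{X}}^*_w$ computed on a random subsample. Conditioning on both $\mathcal{D}_n$ and $\hat{\boldsymbol{\beta}}_{r_0}$ freezes them, so the argument above goes through with $\tilde\Gamma$ evaluated at the realized probabilities, which is exactly (\ref{eq11}). I would also verify that $\tilde\Gamma$ is bounded and bounded away from singularity so that the normalized limit is nondegenerate. The upper bound follows from $\pi_i^{-1}\le n/\lambda$. For the lower bound, the plan is to compare $\tilde\Gamma$ with $\frac{1}{n}\sum_i[\int(\mathbf{X}_i-\overline{\mathbf{X}}_w)dN_i]^{\otimes2}$, using $\pi_i\le(1-\lambda)\pi_i^{app}+\lambda/n$ and Cauchy--Schwarz. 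The condition $r=o(n)$ is used to neglect repeated draws and the dependence between the first- and second-step samples, where the overlap is of order $r_0r/n$.
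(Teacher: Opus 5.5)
Your reduction is a different and more direct route than the paper's. Conditionally on $\mathcal{D}_n$ and the pilot subsample, the second-step draws are i.i.d.\ from the fixed law $\{\pi_i^{app,rob}(\lambda)\}$. (Condition on the pilot subsample itself, not only on $\hat{\boldsymbol{\beta}}_{r_0}$, because $\pi_i^{app}$ also uses the pilot-based $\overline{\mathbf{X}}_w^*$.) So $\tilde{\boldsymbol{\beta}}_{sub}$ is the estimator of Theorem~\ref{th1} with $\pi_i=\pi_i^{app,rob}(\lambda)$, and $n\pi_i^{app,rob}(\lambda)\ge\lambda$ gives Assumption~\ref{ass4} deterministically.

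The paper instead uses Lemma~\ref{lem2}, which needs $r_0\to\infty$ and $\|\hat{\boldsymbol{\beta}}_{r_0}-\hat{\boldsymbol{\beta}}_{full}\|=O_P(r_0^{-1/2})$. It compares the feasible score and $\tilde{\Gamma}$ with their counterparts under the oracle $\pi_i^{Lopt,rob}(\lambda)$. That comparison is what you would need to claim the feasible design attains the oracle covariance. But (\ref{eq11}) is already written with the realized $\pi_i^{app,rob}(\lambda)$. Your argument therefore needs neither pilot consistency nor $r_0\to\infty$. It also avoids controlling $\Delta_r$ at the $r^{-1/2}$ scale, which the paper's $o_P(1)$ bound does not deliver. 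Likewise $r=o(n)$ plays no role here: after conditioning, repeated draws and overlap with the pilot are irrelevant. That hypothesis matters only for Corollary~\ref{cor1}.

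The genuine gap is in your Step 4, where you rerun Theorem~\ref{th1}. Let $D(t)=\overline{\mathbf{X}}_w^*(t,\hat{\boldsymbol{\beta}}_{full})-\overline{\mathbf{X}}_w(t,\hat{\boldsymbol{\beta}}_{full})$. The part of $\mathbf{U}^*(\hat{\boldsymbol{\beta}}_{full})$ it generates is $-\int_0^\tau D\,d\bar N^*$, where $\bar N^*(t)=r^{-1}\sum_{k=1}^r(n\pi_k^*)^{-1}N_k^*(t)$ has conditional mean $\bar N(t)=n^{-1}\sum_{i=1}^nN_i(t)$.

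\begin{itemize}
\item The piece $\int D\,d(\bar N^*-\bar N)$ couples two $O_P(r^{-1/2})$ fluctuations and is $o_P(r^{-1/2})$.
\item The piece $\int D\,d\bar N$ is not negligible. We have $D=\{S^{*(1)}-\overline{\mathbf{X}}_w\,S^{*(0)}\}/S^{*(0)}$, and the numerator is a conditionally centered i.i.d.\ average. So $\int D\,d\bar N$ is a mean-zero average of exact order $r^{-1/2}$, generically nondegenerate.
\end{itemize}

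Keeping this piece gives
\[
\mathbf{U}^*(\hat{\boldsymbol{\beta}}_{full})=\frac{1}{r}\sum_{k=1}^{r}\frac{1}{n\pi_k^*}\int_0^\tau\bigl\{\mathbf{X}_k^*-\overline{\mathbf{X}}_w(t,\hat{\boldsymbol{\beta}}_{full})\bigr\}\,d\hat M_k^*(t)+o_P(r^{-1/2}),
\]
where $d\hat M_i(t)=dN_i(t)-w_c(t)R_i(t)W_i\exp(\hat{\boldsymbol{\beta}}_{full}^{\top}\mathbf{X}_i)\,d\bar N(t)/S^{(0)}(t,\hat{\boldsymbol{\beta}}_{full})$, and $\hat M_k^*$ is its value for the $k$th draw. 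The conditional covariance is therefore $n^{-2}\sum_{i}\pi_i^{-1}\bigl[\int_0^\tau\{\mathbf{X}_i-\overline{\mathbf{X}}_w(t,\hat{\boldsymbol{\beta}}_{full})\}\,d\hat M_i(t)\bigr]^{\otimes2}$. In general, and in particular for non-uniform $\pi_i$, this differs from (\ref{eq11}).

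The paper's proof of Theorem~\ref{th1} makes the same leap in (\ref{S3}). It cites (\ref{S1}), which is only a consistency statement; Lemma~\ref{lem1} gives $O_P(r^{-1/2})$, not $o_P(r^{-1/2})$. So your reduction is exactly as sound as Theorem~\ref{th1}. If you cite Theorem~\ref{th1}, Theorem~\ref{th4} follows at once. If you re-derive it, you cannot reach (\ref{eq11}) without discarding this term, and a correct derivation replaces $dN_i$ by $d\hat M_i$ in $\tilde{\Gamma}$.
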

\begin{remark}
    The estimator $\hat{\boldsymbol{\beta}}_{full}$ based on the full sample converges to the true parameter $\boldsymbol{\beta}_0$ at rate $n^{-1/2}$. Since $r=o(n)$, $\hat{\boldsymbol{\beta}}_{full}$ and $\tilde{\boldsymbol{\beta}}_{sub}$ are asymptotically substitutable by $\boldsymbol{\beta}_0$ without affecting the limiting distribution. Consequently, the asymptotic results in Theorem \ref{th4} can be directly used for statistical inference on the true parameter. Moreover, in large-scale data analysis, subsampling typically satisfies $r \ll n$, and inference methods based on asymptotic normality are not only theoretically justified but also practically reasonable and feasible.
\end{remark}
\begin{corollary}
\label{cor1}
\normalsize
    Under Assumptions \ref{ass1}-\ref{ass3}, if $r=o(n)$, then as $r_0 \to \infty$, $r \to \infty$, and $n \to \infty$, the two-step estimator $\tilde{\boldsymbol{\beta}}_{sub}$ obtained from Algorithm 1 converges to the true parameter $\boldsymbol{\beta}_0$ at the rate $r^{-1/2}$. Specifically, $\| \tilde{\boldsymbol{\beta}}_{sub}-\boldsymbol{\beta}_{0}\Vert =O_{P}(r^{-1/2})$. In addition, $\tilde{\boldsymbol{\beta}}_{sub}$ satisfies asymptotic normality, that is,
\begin{equation}\label{eq12}
\sqrt{r}\left(\tilde{\boldsymbol{\beta}}_{sub}-{\boldsymbol{\beta}}_0\right)\xrightarrow{d}N(0,\tilde{\Sigma}),
\end{equation}
where $\tilde{\Sigma}$ is the same as that given in Theorem \ref{th4}.
\end{corollary}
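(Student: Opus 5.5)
The plan is to decompose the error as
\[
\sqrt{r}\bigl(\tilde{\boldsymbol{\beta}}_{sub}-\boldsymbol{\beta}_0\bigr)=\sqrt{r}\bigl(\tilde{\boldsymbol{\beta}}_{sub}-\hat{\boldsymbol{\beta}}_{full}\bigr)+\sqrt{r/n}\,\sqrt{n}\bigl(\hat{\boldsymbol{\beta}}_{full}-\boldsymbol{\beta}_0\bigr),
\]
and to treat the two terms separately, combining Theorem \ref{th4} with the standard full-data theory for the EE-II estimator.

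\emph{First term.} Theorem \ref{th4} gives the asymptotic normality of $\sqrt{r}(\tilde{\boldsymbol{\beta}}_{sub}-\hat{\boldsymbol{\beta}}_{full})$ conditionally on $\mathcal{D}_n$ and $\hat{\boldsymbol{\beta}}_{r_0}$. To convert this into an unconditional statement, I will use the fact that conditional convergence in distribution with a bounded limit characteristic function implies unconditional convergence, by dominated convergence applied to $E[\exp(i\mathbf{a}^\top\sqrt{r}\,\tilde{\Sigma}^{-1/2}(\tilde{\boldsymbol{\beta}}_{sub}-\hat{\boldsymbol{\beta}}_{full}))\mid\mathcal{D}_n,\hat{\boldsymbol{\beta}}_{r_0}]$. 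In addition, conditional tightness, i.e.\ $P(\sqrt{r}\|\tilde{\boldsymbol{\beta}}_{sub}-\hat{\boldsymbol{\beta}}_{full}\|>M\mid\mathcal{D}_n)\to0$ in probability, yields the unconditional $O_P(r^{-1/2})$ bound.

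\emph{Second term.} I will invoke the classical result for the EE-II estimator of \cite{qin2010statistical}: under Assumptions \ref{ass1}--\ref{ass3}, $\sqrt{n}(\hat{\boldsymbol{\beta}}_{full}-\boldsymbol{\beta}_0)=O_P(1)$. This follows from a Taylor expansion of $\mathbf{U}_{II}$, the martingale representation of $n^{-1/2}\mathbf{U}_{II}(\boldsymbol{\beta}_0)$ (plus the term from estimating $S_c$, which is still $O_P(1)$), and the positive definite limit in Assumption \ref{ass2}. Since $r=o(n)$, the second term is $o_P(1)$. The triangle inequality then gives $\|\tilde{\boldsymbol{\beta}}_{sub}-\boldsymbol{\beta}_0\|=O_P(r^{-1/2})+O_P(n^{-1/2})=O_P(r^{-1/2})$, and Slutsky's theorem gives (\ref{eq12}).

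\emph{Main obstacle.} The delicate point is that $\tilde{\Sigma}$ is data-dependent: it depends on $\mathcal{D}_n$, $\hat{\boldsymbol{\beta}}_{full}$, and $\hat{\boldsymbol{\beta}}_{r_0}$ through $\pi_i^{app,rob}$. The statement should therefore be read as $\sqrt{r}\,\tilde{\Sigma}^{-1/2}(\tilde{\boldsymbol{\beta}}_{sub}-\boldsymbol{\beta}_0)\xrightarrow{d}N(0,I)$.

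To make the Slutsky step legitimate, I need $\tilde{\Sigma}$ bounded and bounded away from singularity in probability.
\begin{itemize}
\item \emph{Upper bound.} The robust mixture gives $n\pi_i^{app,rob}(\lambda)\ge\lambda$, so $\tilde{\Gamma}\le\lambda^{-1}n^{-1}\sum_i[\cdot]^{\otimes2}$. This is $O_P(1)$ because $\mathbf{X}_i$ are bounded (Assumption \ref{ass3}) and $\overline{\mathbf{X}}_w$ is bounded.
\item \emph{Lower bound.} Since $\pi_i^{app,rob}\le1$ trivially, I would use instead that $n^{-1}\sum_i\pi_i^{app}\cdot n=1$. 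By Cauchy--Schwarz-type arguments, $\tilde{\Gamma}$ then dominates a multiple of $\bigl(n^{-1}\sum_i\|\cdot\|\bigr)^2$-type quantities; alternatively, one can simply argue that $\tilde{\Gamma}$ has a positive definite limit, as in the proof of Theorem \ref{th4}.
\item \emph{Hessian.} $\Psi$ converges to a positive definite matrix by Assumption \ref{ass2} together with the consistency of $\hat{\boldsymbol{\beta}}_{full}$.
\end{itemize}

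With these bounds, multiplying the $o_P(1)$ second term by $\tilde{\Sigma}^{-1/2}$ keeps it $o_P(1)$, which completes the argument.
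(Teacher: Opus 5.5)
Your proposal is correct and is essentially the paper's argument. You use the same split of $\sqrt{r}(\tilde{\boldsymbol{\beta}}_{sub}-\boldsymbol{\beta}_0)$ into the Theorem~\ref{th4} term plus $\sqrt{r}(\hat{\boldsymbol{\beta}}_{full}-\boldsymbol{\beta}_0)=o_P(1)$ via $r=o(n)$; the paper cites Proposition~2 of \cite{wang2022sampling} for the conditional-to-unconditional transfer, which you instead prove directly via bounded conditional characteristic functions and conditional tightness. Your handling of the random $\tilde{\Sigma}$ goes beyond the paper, which never addresses it, but two points remain: your Cauchy--Schwarz bound $\mathbf{a}^\top\tilde{\Gamma}\mathbf{a}\ge\bigl(n^{-1}\sum_i|\mathbf{a}^\top\phi_i|\bigr)^2$ ($\phi_i$ the $i$th score contribution) still leaves you to show that $\inf_{\|\mathbf{a}\|=1}n^{-1}\sum_i|\mathbf{a}^\top\phi_i|$ stays away from zero, since the paper's proof of Theorem~\ref{th4} supplies no positive definite limit for $\tilde{\Gamma}$; and you can bypass this lower bound entirely by reading the conditional limit in bounded-Lipschitz distance, where the $\mathcal{D}_n$-measurable shift $\sqrt{r}(\hat{\boldsymbol{\beta}}_{full}-\boldsymbol{\beta}_0)=o_P(1)$ is harmless.
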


In order to perform inference for the true parameter using $\tilde{\boldsymbol{\beta}}_{sub}$, it is necessary to estimate the variance-covariance matrix of $\tilde{\boldsymbol{\beta}}_{sub}$. The most straightforward approach is to replace $\hat{\boldsymbol{\beta}}_{full}$ with $\tilde{\boldsymbol{\beta}}_{sub}$ in the expression for the asymptotic variance-covariance matrix $\tilde{\Sigma}$, thereby obtaining a valid estimator. Although this method is theoretically justified, it still requires access to the full dataset $\mathcal{D}_n$. When dealing with massive datasets, the associated computational cost can be substantial, potentially offsetting the advantages of subsampling. Consequently, it is desirable to develop a less computationally demanding alternative. To this end, we propose a subsample-based covariance matrix estimator, denoted by $\breve{\Sigma}$, for the two-step estimator $\tilde{\boldsymbol{\beta}}_{sub}$. While a full-data approach is possible, it entails heavy computational burdens. To enhance efficiency, the proposed $\breve{\Sigma}$ relies solely on the information from the subsample $\mathcal{D}_n^*$, which is given by
\begin{equation}
    \breve{\Sigma} = \breve{\Psi}^{-1} \breve{\Gamma} \breve{\Psi}^{-1},
\end{equation}
where
\begin{equation}
    \breve{\Psi} = \frac{1}{rn} \sum_{i=1}^{r} \frac{1}{\pi_i^{app,rob*}\left( \lambda \right)} \int_0^{\tau} \left[ \frac{S^{*(2)}(t, \tilde{\boldsymbol{\beta}}_{sub})}{S^{*(0)}(t, \tilde{\boldsymbol{\beta}}_{sub})}
- \left\{ \frac{S^{*(1)}(t, \tilde{\boldsymbol{\beta}}_{sub})}{S^{*(0)}(t, \tilde{\boldsymbol{\beta}}_{sub})} \right\}^{\otimes2} \right] dN_i^*(t),
\end{equation}
and
\begin{equation}
    \breve{\Gamma} = \frac{1}{r n^2} \sum_{i=1}^{r} \frac{1}{(\pi_i^{app,rob*}\left( \lambda  \right))^2}
\left[ \int_0^{\tau} \bigl\{ \mathbf{X}_i^* - \overline{\mathbf{X}}_w^*(t, \tilde{\boldsymbol{\beta}}_{sub}) \bigr\} \, dN_i^*(t) \right]^{\otimes 2}.
\end{equation}
\section{Simulations}\label{sec3}
This section evaluates the finite-sample performance of the proposed optimal subsampling method through numerical simulations. The unbiased failure times $\tilde{T}_i$ are generated independently and identically from a Cox proportional hazards model:
$$\lambda\left(t\mid\mathbf{X}_i\right)=\lambda_0\left(t\right)\exp\left(\boldsymbol{\beta}_0^\top\mathbf{X}_i\right).$$
The baseline hazard function is set to $\lambda_0(t) = 2t$ and the true parameter vector $\boldsymbol{\beta}_0 = (-2, -1, 0, 1, 2)^\top$. The covariate vector is $\mathbf{X}_i = (X_{i1}, X_{i2}, ..., X_{ip})^\top$, with $p=5$. Two scenarios are considered for the covariate distributions.
\begin{enumerate}
\item[(a)] $X_{1}\sim\text{Bernoulli}(0.5)$, $X_{2}$ and $X_{3}$ are uniformly distributed on $[-0.5, 0.5]$, and $X_{4}, X_{5} \sim N(0, 1)$.
\item[(b)] $X_{1}\sim\text{Bernoulli}(0.5)$, and the remaining continuous covariates $\mathbf{X}_{-1} = (X_{2}, \ldots, X_{p})^T$ are generated from an equally weighted mixture of two multivariate normal distributions: $0.5\cdot N(-\mathbf{1}, \boldsymbol{\Upsilon}) + 0.5\cdot N(\mathbf{1}, \boldsymbol{\Upsilon})$, where $\boldsymbol{\Upsilon}_{ij} = 0.5^{|i-j|}$.
\end{enumerate}
Censoring times $C_i$ are generated from a uniform distribution $U(0, c)$, where $c$ is chosen to achieve censoring rates of approximately $20\%$ or $50\%$. Left-truncation times $A_i$ are independently drawn from a uniform distribution on $(0, \tau_0)$, and only survival times satisfying $\tilde{T}_i \geq  A_i$ are retained. The value of $\tau_0$ is set to be larger than the upper bound of $T_i$ to ensure that the stationarity assumption is satisfied. A total of $n = 10^6$ observations are generated, and subsample sizes of $r = 400, 600, 800,$ and $1000$ are considered. Each simulation setting is replicated $1000$ times. The empirical MSE is computed as follows:
\begin{equation}\label{eq16}
    \operatorname{MSE}\left(\tilde{\boldsymbol{\beta}}_{sub}\right)=\frac{1}{1000} \sum_{i=1}^{1000}\left\|\tilde{\boldsymbol{\beta}}_{sub}^{(i)}-\hat{\boldsymbol{\beta}}_{full}\right\|^{2},
\end{equation}
where $\tilde{\boldsymbol{\beta}}_{sub}^{(i)}$ represents the result of the $i$-th simulation.

We further investigate the impact of the pilot sample size $r_0$ on the estimation performance. Tables \ref{table:1} and \ref{table:2} present the MSE of the L-optimal and A-optimal estimators for pilot sample sizes $r_0$ ranging from 300 to 500. The results indicate that the MSE decreases slowly as $r_0$ increases, suggesting that the two-step subsampling procedure is robust to the choice of the pilot sample size. Based on this finding, we recommend that, under simulation settings similar to those considered in this paper, the pilot sample size be chosen to ensure stability without incurring high computational cost. In higher-dimensional settings or scenarios with higher censoring rates, a more conservative approach is to increase the pilot sample size appropriately.
\begin{table}[!h]
\centering
\caption{MSE of the L-optimal subsampling estimator for different subsample sizes $r_0$}
\label{table:1}
\begin{tabular}{l *{9}{c}}
\toprule
 & & \multicolumn{4}{c}{CR = 20\%} & \multicolumn{4}{c}{CR = 50\%} \\
\cmidrule(lr){3-6} \cmidrule(lr){7-10}
Case & $r_0$ & \multicolumn{4}{c}{$r$} & \multicolumn{4}{c}{$r$} \\
\cmidrule(lr){3-6} \cmidrule(lr){7-10}
 &  & 400 & 600 & 800 & 1000 & 400 & 600 & 800 & 1000 \\
\midrule
\multirow{3}{*}{(a)}
 & 300 & 0.0933 & 0.0587 & 0.0460 & 0.0345 & 0.0943 & 0.0640 & 0.0467 & 0.0375 \\
 & 400 & 0.0922 & 0.0589 & 0.0445 & 0.0361 & 0.0957 & 0.0605 & 0.0473 & 0.0381 \\
 & 500 & 0.0950 & 0.0644 & 0.0439 & 0.0354 & 0.0967 & 0.0654 & 0.0461 & 0.0378 \\
\midrule
\multirow{3}{*}{(b)}
 & 300 & 0.0581 & 0.0390 & 0.0272 & 0.0233 & 0.0605 & 0.0413 & 0.0303 & 0.0230 \\
 & 400 & 0.0579 & 0.0358 & 0.0278 & 0.0238 & 0.0585 & 0.0386 & 0.0295 & 0.0231 \\
 & 500 & 0.0591 & 0.0397 & 0.0262 & 0.0214 & 0.0601 & 0.0368 & 0.0300 & 0.0232 \\
\bottomrule
\end{tabular}
\end{table}
\begin{table}[!h]
\centering
\caption{MSE of the A-optimal subsampling estimator for different subsample sizes $r_0$}
\label{table:2}
\begin{tabular}{l *{9}{c}}
\toprule
 & & \multicolumn{4}{c}{CR = 20\%} & \multicolumn{4}{c}{CR = 50\%} \\
\cmidrule(lr){3-6} \cmidrule(lr){7-10}
Case & $r_0$ & \multicolumn{4}{c}{$r$} & \multicolumn{4}{c}{$r$} \\
\cmidrule(lr){3-6} \cmidrule(lr){7-10}
 &  & 400 & 600 & 800 & 1000 & 400 & 600 & 800 & 1000 \\
\midrule
\multirow{3}{*}{(a)}
& 300 & 0.0743 & 0.0487 & 0.0389 & 0.0293 & 0.0839 & 0.0541 & 0.0410 & 0.0312 \\
 & 400 & 0.0773 & 0.0489 & 0.0384 & 0.0284 & 0.0774 & 0.0534 & 0.0381 & 0.0317 \\
 & 500 & 0.0757 & 0.0500 & 0.0357 & 0.0279 & 0.0770 & 0.0519 & 0.0392 & 0.0308 \\
\midrule
\multirow{3}{*}{(b)}
 & 300 & 0.0504 & 0.0301 & 0.0224 & 0.0184 & 0.0523 & 0.0329 & 0.0246 & 0.0195 \\
 & 400 & 0.0466 & 0.0311 & 0.0227 & 0.0173 & 0.0493 & 0.0320 & 0.0242 & 0.0209 \\
 & 500 & 0.0484 & 0.0322 & 0.0231 & 0.0185 & 0.0485 & 0.0341 & 0.0243 & 0.0194 \\
\bottomrule
\end{tabular}
\end{table}

In addition to the pilot sample size, the mixture proportion $\lambda$ plays a critical role in balancing estimation efficiency and numerical stability. We further examine the effect of $\lambda$ on the performance of the algorithm. Tables \ref{table:new1} and \ref{table:new2} show that when $\lambda=0$, the method reduces to using the theoretically optimal sampling probabilities.  In this scenario, the resulting MSE is slightly smaller than that for $\lambda=0.1$, which is expected because the simulated data follow the prespecified model and the pilot estimator is accurate. To enhance robustness for practical data that may contain outliers or exhibit complex distributions, and to prevent weight inflation caused by extremely small sampling probabilities, we set $\lambda=0.1$ in the subsequent simulations and empirical study.
\begin{table}[!h]
\centering
\caption{MSE of the A-optimal subsampling estimator under different mixing proportions}
\label{table:new1}
\begin{tabular}{l c *{8}{c}}
\toprule
 & & \multicolumn{4}{c}{CR = 20\%} & \multicolumn{4}{c}{CR = 50\%} \\
 \cmidrule(lr){3-6} \cmidrule(lr){7-10}
Case & $r$ & \multicolumn{4}{c}{$\lambda$} & \multicolumn{4}{c}{$\lambda$} \\
\cmidrule(lr){3-6} \cmidrule(lr){7-10}
 &  & 0 & 0.1 & 0.3 & 0.5 & 0 & 0.1 & 0.3 & 0.5 \\
\midrule
\multirow{4}{*}{(a)}
& 400 & 0.0762 & 0.0791 & 0.0836 & 0.0906 & 0.0851 & 0.0878 & 0.1016 & 0.1176 \\
& 600 & 0.0491 & 0.0507 & 0.0528 & 0.0491 & 0.0544 & 0.0562 & 0.0630 & 0.0718 \\
& 800 & 0.0361 & 0.0370 & 0.0392 & 0.0507 & 0.0395 & 0.0420 & 0.0481 & 0.0557 \\
& 1000 & 0.0281 & 0.0290 & 0.0313 & 0.0528 & 0.0319 & 0.0340 & 0.0381 & 0.0433 \\
\midrule
\multirow{4}{*}{(b)}
& 400 & 0.0483 & 0.0512 & 0.0529 & 0.0556 & 0.0505 & 0.0555 & 0.0612 & 0.0712 \\
& 600 & 0.0303 & 0.0321 & 0.0332 & 0.0356 & 0.0348 & 0.0390 & 0.0431 & 0.0486 \\
& 800 & 0.0225 & 0.0243 & 0.0257 & 0.0268 & 0.0256 & 0.0290 & 0.0310 & 0.0355 \\
& 1000 & 0.0185 & 0.0198 & 0.0205 & 0.0218 & 0.0203 & 0.0234 & 0.0244 & 0.0280 \\
\bottomrule
\end{tabular}
\end{table}
\begin{table}[!h]
\centering
\caption{MSE of the L-optimal subsampling estimator under different mixing proportions}
\label{table:new2}
\begin{tabular}{l c *{8}{c}}
\toprule
 & & \multicolumn{4}{c}{CR = 20\%} & \multicolumn{4}{c}{CR = 50\%} \\
 \cmidrule(lr){3-6} \cmidrule(lr){7-10}
Case & $r$ & \multicolumn{4}{c}{$\lambda$} & \multicolumn{4}{c}{$\lambda$} \\
\cmidrule(lr){3-6} \cmidrule(lr){7-10}
 &  & 0 & 0.1 & 0.3 & 0.5 & 0 & 0.1 & 0.3 & 0.5 \\
\midrule
\multirow{4}{*}{(a)}
& 400 & 0.0924 & 0.0931 & 0.0950 & 0.0964 & 0.1034 & 0.1068 & 0.1157 & 0.1302 \\
& 600 & 0.0610 & 0.0612 & 0.0629 & 0.0628 & 0.0637 & 0.0660 & 0.0725 & 0.0833 \\
& 800 & 0.0437 & 0.0441 & 0.0439 & 0.0447 & 0.0479 & 0.0504 & 0.0552 & 0.0606 \\
& 1000 & 0.0361 & 0.0364 & 0.0372 & 0.0377 & 0.0380 & 0.0398 & 0.0440 & 0.0479 \\
\midrule
\multirow{4}{*}{(b)}
& 400 & 0.0576 & 0.0594 & 0.0607 & 0.0632 & 0.0609 & 0.0683 & 0.0732 & 0.0812 \\
& 600 & 0.0374 & 0.0398 & 0.0395 & 0.0411 & 0.0376 & 0.0430 & 0.0451 & 0.0523 \\
& 800 & 0.0290 & 0.0302 & 0.0304 & 0.0312 & 0.0299 & 0.0338 & 0.0359 & 0.0403 \\
& 1000 & 0.0233 & 0.0245 & 0.0243 & 0.0252 & 0.0242 & 0.0281 & 0.0296 & 0.0333 \\
\bottomrule
\end{tabular}
\end{table}

As can be seen from Figure \ref{fig:1}, with fixed $r_0 = 300$ and $\lambda=0.1$, the MSE of all methods decreases as the subsample size $r$ increases. This finding aligns with the asymptotic results in Theorem \ref{th3}, indicating that a larger subsample size leads to improved estimation precision. Furthermore, there are notable differences in performance across the methods. Under all considered scenarios, the proposed A-optimality (A-opt) and L-optimality (L-opt)  methods substantially outperform the uniform sampling method. This suggests that, in massive data settings, the uniform method is inefficient at selecting informative observations, resulting in considerable efficiency loss. Among the methods compared, the proposed A-opt method yields the best performance, while the L-opt method performs slightly worse, though the difference is not substantial. In contrast, the uniform sampling method consistently exhibits the largest MSE, indicating that equal-probability sampling fails to effectively identify informative observations in massive data settings, resulting in substantial efficiency loss.
\begin{figure}[htbp]
	\centering
	\includegraphics[width=0.9\textwidth]{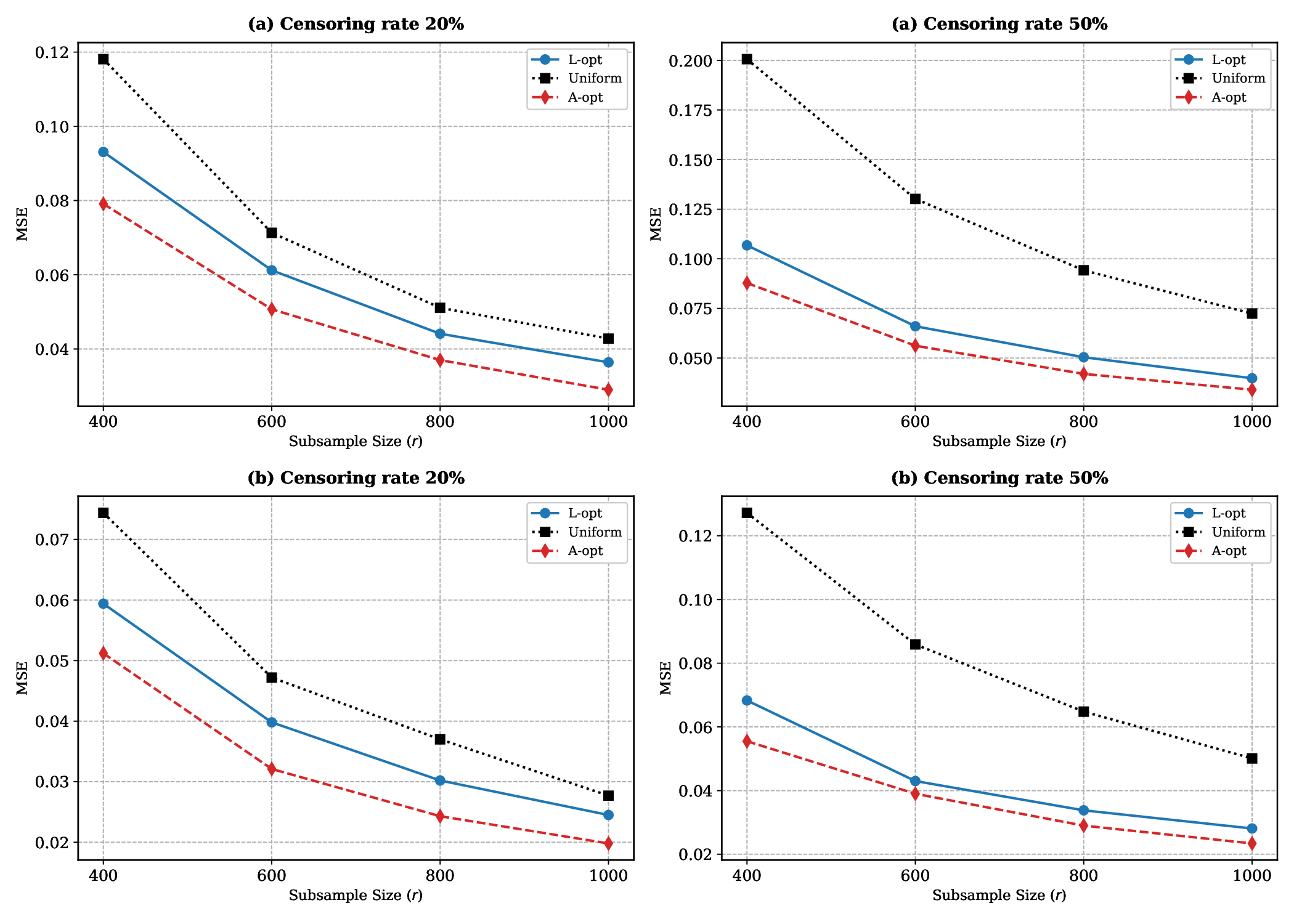}
	\caption{MSE of different subsampling methods.\label{fig:1}}
\end{figure}

To assess the performance improvement of the proposed optimal subsampling over simple uniform sampling, we define a relative efficiency measure as follows:
\begin{equation}\label{eq17}
    RE=\frac{\mathrm{MSE}_{\mathrm{unif}}}{\mathrm{MSE}_{\mathrm{opt}}}.
\end{equation}
Figure \ref{fig:2} presents the relative efficiency (RE) of different sampling methods. Under all settings, the relative efficiencies of both A-opt and L-opt exceed 1. Moreover, A-opt consistently outperforms L-opt in terms of estimation efficiency, indicating that the A-opt criterion captures more information. The censoring rate also has a substantial impact on the performance of the sampling strategies. When the censoring rate increases from 20\% to 50\%, the overall RE increases substantially, indicating that the advantage of optimal sampling is amplified under heavy censoring. Furthermore, as the subsample size $r$ increases from 400 to 1000, the relative efficiency remains relatively stable without a noticeable decline, suggesting that the proposed method maintains good stability across different sampling sizes.

\begin{figure}[!h]
	\centering
	\includegraphics[width=0.9\textwidth]{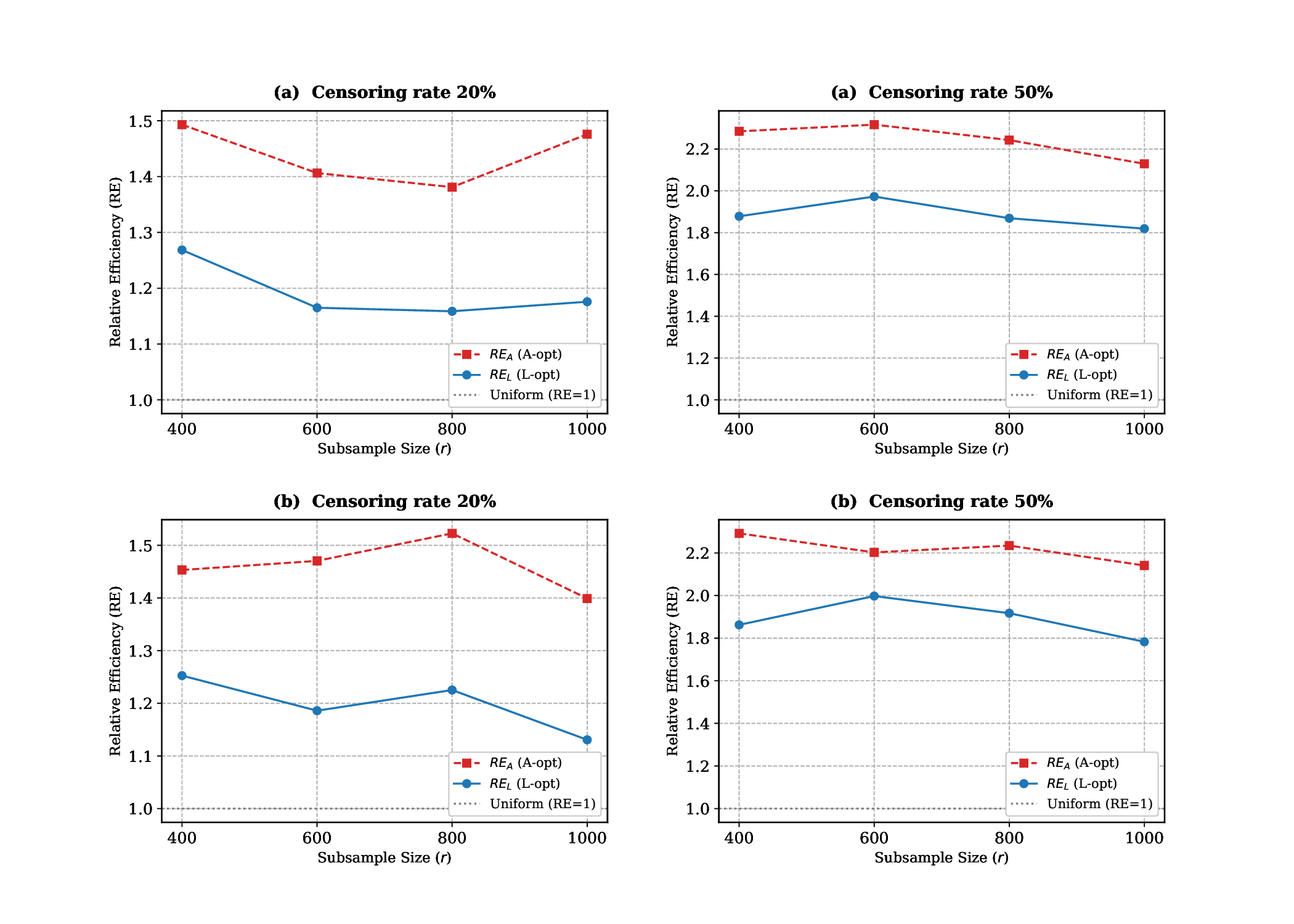}
	\caption{Relative efficiency across different subsample sizes.\label{fig:2}}
\end{figure}
Figure \ref{fig:3} shows that, under the various sampling schemes, the running time decreases substantially as the censoring rate increases. This is due to the reduction in the effective risk set, which lowers the computational complexity associated with risk set updating and weight calculation in the full-sample estimation, thereby significantly reducing the overall runtime. Furthermore, the running times under the two covariate configurations (scenarios (a) and (b)) are nearly identical, indicating that the computational cost is primarily driven by the sample size rather than the specific data distribution.
When comparing the three subsampling methods, uniform sampling incurs the lowest running time, as it does not require constructing complex sampling probabilities. Both the L-opt and A-opt methods involve the calculation of approximately optimal probabilities, which results in a slightly higher computational overhead. Among these, the A-opt method is the most time-consuming, as it entails additional operations involving the matrix $\Psi$; however, the difference between the two optimal methods is negligible. Notably, relative to the substantial computational cost of full-data estimation, the additional cost for probability construction in the L-opt and A-opt strategies is almost negligible. These results indicate that optimal subsampling strategies sacrifice only a modest amount of computational efficiency while achieving substantially improved estimation accuracy compared to uniform sampling, thereby demonstrating strong computational benefits and scalability in large-scale survival data analysis.
\begin{figure}[!h]
	\centering
	\includegraphics[width=0.9\textwidth]{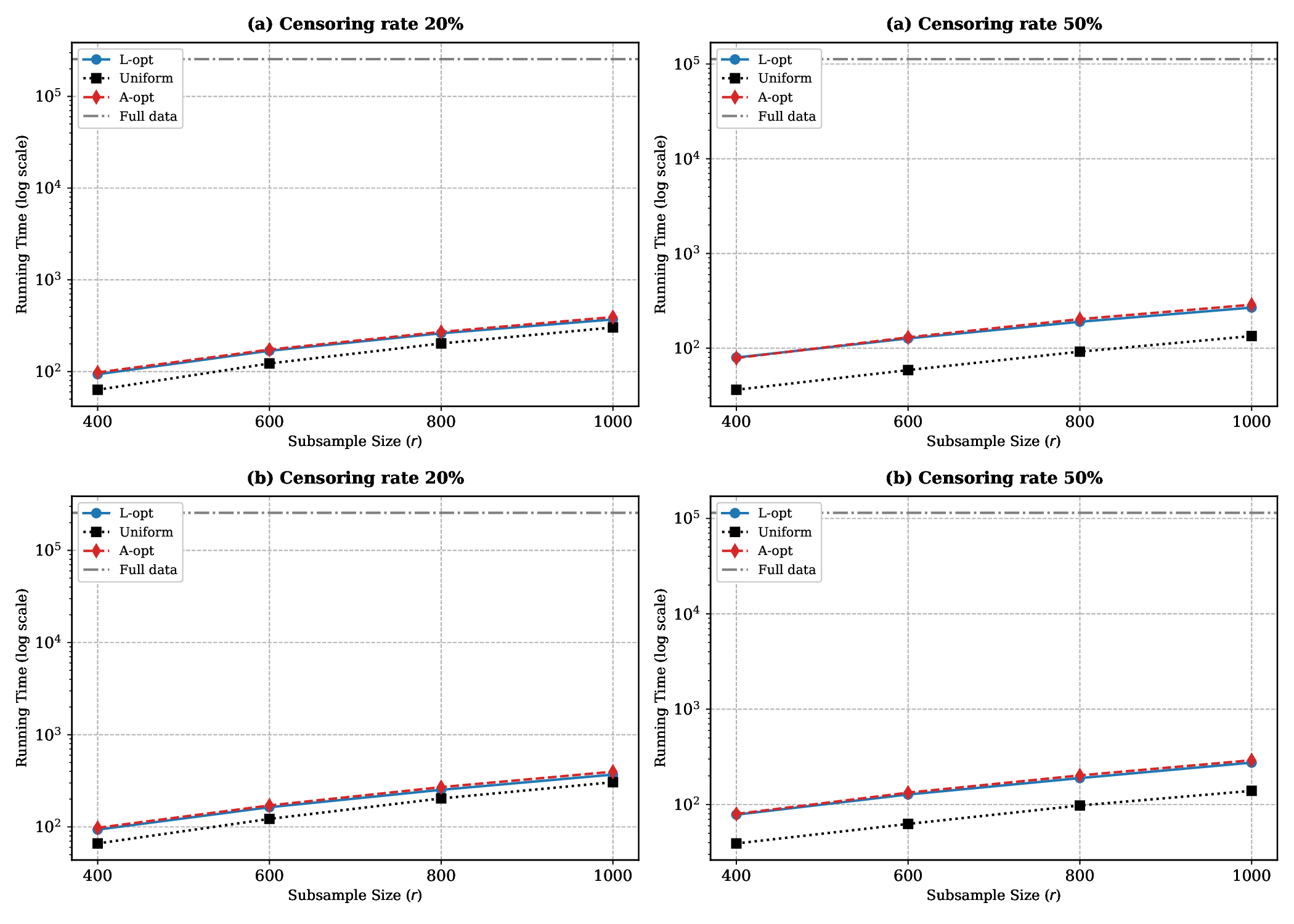}
	\caption{Running time comparison for different subsampling methods.\label{fig:3}}
\end{figure}

Tables \ref{tab:caseI 20} to \ref{tab:caseII 50} present the finite-sample performance of the estimators under various settings, including bias, estimated standard error (ESE), sample standard error (SSE), and 95\% confidence interval coverage probability (CP). The results demonstrate close agreement between the ESE and SSE, indicating that the variance estimator accurately captures the true variability of the estimator. As the subsample size increases, both ESE and SSE exhibit a decreasing trend. Meanwhile, the coverage probabilities remain close to the nominal 95\% level across all settings, further confirming the stability of the proposed method in finite-sample scenarios.
\begin{sidewaystable}[h]
\centering
\caption{(a) censoring rate 20\%}
\label{tab:caseI 20}
\begin{tabular}{c c *{12}{c}}
\toprule
\multirow{2}{*}{$r$} & \multirow{2}{*}{$\boldsymbol{\beta}$} & \multicolumn{4}{c}{Lopt} & \multicolumn{4}{c}{UNIF} & \multicolumn{4}{c}{Aopt} \\
\cmidrule(lr){3-6} \cmidrule(lr){7-10} \cmidrule(lr){11-14}
    & & Bias & ESE & SSE & CP
    & Bias & ESE & SSE & CP
    & Bias & ESE & SSE & CP \\
\midrule
\multirow{5}{*}{400}
 & $\boldsymbol{\beta}_1$ & -0.0214 & 0.1377 & 0.1426 & 0.949 & -0.0242 & 0.1484 & 0.1604 & 0.933 & -0.0078 & 0.1238 & 0.1263 & 0.943 \\
 & $\boldsymbol{\beta}_2$ & -0.0155 & 0.1676 & 0.1729 & 0.942 & -0.0252 & 0.1811 & 0.1843 & 0.943 & -0.0013 & 0.1529 & 0.1569 & 0.937 \\
 & $\boldsymbol{\beta}_3$ & -0.0029 & 0.1593 & 0.1611 & 0.949 & -0.0070 & 0.1728 & 0.1850 & 0.931 & -0.0114 & 0.1472 & 0.1534 & 0.947 \\
 & $\boldsymbol{\beta}_4$ & 0.0098 & 0.0626 & 0.0638 & 0.940 & 0.0134 & 0.0728 & 0.0765 & 0.934 & 0.0091 & 0.0627 & 0.0648 & 0.941 \\
 & $\boldsymbol{\beta}_5$ & 0.0182 & 0.1054 & 0.1083 & 0.944 & 0.0192 & 0.1175 & 0.1288 & 0.923 & 0.0141 & 0.0994 & 0.1018 & 0.945 \\
\midrule
\multirow{5}{*}{600}
 & $\boldsymbol{\beta}_1$ & -0.0091 & 0.1117 & 0.1134 & 0.950 & -0.0049 & 0.1202 & 0.1262 & 0.932 & -0.0017 & 0.1004 & 0.1021 & 0.945 \\
 & $\boldsymbol{\beta}_2$ & -0.0166 & 0.1362 & 0.1400 & 0.942 & -0.0077 & 0.1475 & 0.1505 & 0.948 & -0.0029 & 0.1242 & 0.1258 & 0.953 \\
 & $\boldsymbol{\beta}_3$ & 0.0023 & 0.1296 & 0.1341 & 0.940 & -0.0035 & 0.1405 & 0.1408 & 0.957 & -0.0056 & 0.1193 & 0.1240 & 0.934 \\
 & $\boldsymbol{\beta}_4$ & 0.0040 & 0.0508 & 0.0512 & 0.953 & 0.0013 & 0.0589 & 0.0602 & 0.935 & 0.0059 & 0.0510 & 0.0507 & 0.955 \\
 & $\boldsymbol{\beta}_5$ & 0.0089 & 0.0854 & 0.0879 & 0.942 & 0.0058 & 0.0952 & 0.0962 & 0.959 & 0.0080 & 0.0806 & 0.0800 & 0.951 \\
\midrule
\multirow{5}{*}{800}
 & $\boldsymbol{\beta}_1$ & -0.0061 & 0.0970 & 0.0980 & 0.957 & -0.0125 & 0.1044 & 0.1062 & 0.937 & -0.0045 & 0.0869 & 0.0865 & 0.945 \\
 & $\boldsymbol{\beta}_2$ & -0.0075 & 0.1177 & 0.1180 & 0.947 & -0.0017 & 0.1275 & 0.1247 & 0.956 & -0.0034 & 0.1072 & 0.1075 & 0.951 \\
 & $\boldsymbol{\beta}_3$ & -0.0028 & 0.1118 & 0.1122 & 0.948 & -0.0007 & 0.1219 & 0.1214 & 0.948 & -0.0045 & 0.1030 & 0.1031 & 0.954 \\
 & $\boldsymbol{\beta}_4$ & 0.0031 & 0.0438 & 0.0454 & 0.943 & 0.0060 & 0.0512 & 0.0518 & 0.946 & 0.0017 & 0.0440 & 0.0454 & 0.941 \\
 & $\boldsymbol{\beta}_5$ & 0.0058 & 0.0739 & 0.0763 & 0.944 & 0.0073 & 0.0823 & 0.0811 & 0.945 & 0.0038 & 0.0695 & 0.0727 & 0.942 \\
\midrule
\multirow{5}{*}{1000}
 & $\boldsymbol{\beta}_1$ & -0.0078 & 0.0866 & 0.0892 & 0.940 & -0.0063 & 0.0932 & 0.0934 & 0.943 & -0.0033 & 0.0775 & 0.0772 & 0.952 \\
 & $\boldsymbol{\beta}_2$ & -0.0057 & 0.1054 & 0.1064 & 0.948 & -0.0070 & 0.1143 & 0.1159 & 0.945 & -0.0012 & 0.0957 & 0.0930 & 0.960 \\
 & $\boldsymbol{\beta}_3$ & 0.0020 & 0.1000 & 0.1043 & 0.946 & 0.0043 & 0.1091 & 0.1129 & 0.942 & -0.0044 & 0.0919 & 0.0954 & 0.940 \\
 & $\boldsymbol{\beta}_4$ & 0.0011 & 0.0392 & 0.0395 & 0.946 & 0.0023 & 0.0456 & 0.0470 & 0.942 & 0.0010 & 0.0393 & 0.0392 & 0.952 \\
 & $\boldsymbol{\beta}_5$ & 0.0035 & 0.0661 & 0.0677 & 0.943 & 0.0043 & 0.0734 & 0.0745 & 0.946 & 0.0026 & 0.0622 & 0.0615 & 0.959 \\
\bottomrule
\end{tabular}

\end{sidewaystable}
\begin{sidewaystable}[h]
\centering
\caption{(a) censoring rate 50\%}
\label{tab:caseI 50}
\begin{tabular}{c c *{12}{c}}
\toprule
\multirow{2}{*}{$r$} & \multirow{2}{*}{$\boldsymbol{\beta}$} & \multicolumn{4}{c}{Lopt} & \multicolumn{4}{c}{UNIF} & \multicolumn{4}{c}{Aopt} \\
\cmidrule(lr){3-6} \cmidrule(lr){7-10} \cmidrule(lr){11-14}
    & & Bias & ESE & SSE & CP
    & Bias & ESE & SSE & CP
    & Bias & ESE & SSE & CP \\
\midrule
\multirow{5}{*}{400}
 & $\boldsymbol{\beta}_1$ & -0.0174 & 0.1396 & 0.1435 & 0.949 & -0.0278 & 0.1866 & 0.1912 & 0.939 & -0.0009 & 0.1258 & 0.1271 & 0.942 \\
 & $\boldsymbol{\beta}_2$ & -0.0078 & 0.1788 & 0.1934 & 0.930 & -0.0244 & 0.2429 & 0.2643 & 0.937 & -0.0025 & 0.1657 & 0.1727 & 0.937 \\
 & $\boldsymbol{\beta}_3$ & -0.0007 & 0.1721 & 0.1764 & 0.940 & -0.0103 & 0.2340 & 0.2401 & 0.943 & 0.0003 & 0.1595 & 0.1650 & 0.943 \\
 & $\boldsymbol{\beta}_4$ & 0.0033 & 0.0638 & 0.0665 & 0.942 & 0.0121 & 0.0935 & 0.0990 & 0.934 & -0.0003 & 0.0640 & 0.0651 & 0.940 \\
 & $\boldsymbol{\beta}_5$ & 0.0145 & 0.1053 & 0.1109 & 0.932 & 0.0348 & 0.1479 & 0.1530 & 0.939 & 0.0069 & 0.0989 & 0.1011 & 0.937 \\
\midrule
\multirow{5}{*}{600}
 & $\boldsymbol{\beta}_1$ & -0.0068 & 0.1137 & 0.1135 & 0.949 & -0.0183 & 0.1512 & 0.1576 & 0.937 & 0.0008 & 0.1025 & 0.1015 & 0.955 \\
 & $\boldsymbol{\beta}_2$ & -0.0101 & 0.1454 & 0.1493 & 0.945 & -0.0082 & 0.1967 & 0.2080 & 0.939 & -0.0018 & 0.1346 & 0.1336 & 0.954 \\
 & $\boldsymbol{\beta}_3$ & 0.0001 & 0.1399 & 0.1429 & 0.945 & 0.0046 & 0.1896 & 0.1962 & 0.951 & -0.0046 & 0.1289 & 0.1337 & 0.933 \\
 & $\boldsymbol{\beta}_4$ & 0.0011 & 0.0519 & 0.0527 & 0.944 & 0.0074 & 0.0758 & 0.0800 & 0.940 & -0.0004 & 0.0522 & 0.0533 & 0.938 \\
 & $\boldsymbol{\beta}_5$ & 0.0061 & 0.0857 & 0.0851 & 0.963 & 0.0178 & 0.1193 & 0.1260 & 0.935 & 0.0042 & 0.0805 & 0.0850 & 0.945 \\
\midrule
\multirow{5}{*}{800}
 & $\boldsymbol{\beta}_1$ & -0.0045 & 0.0983 & 0.1019 & 0.953 & -0.0171 & 0.1309 & 0.1361 & 0.937 & 0.0012 & 0.0886 & 0.0883 & 0.941 \\
 & $\boldsymbol{\beta}_2$ & -0.0092 & 0.1261 & 0.1294 & 0.935 & -0.0114 & 0.1701 & 0.1768 & 0.933 & -0.0054 & 0.1163 & 0.1183 & 0.941 \\
 & $\boldsymbol{\beta}_3$ & -0.0019 & 0.1210 & 0.1228 & 0.946 & -0.0048 & 0.1637 & 0.1668 & 0.949 & -0.0039 & 0.1113 & 0.1151 & 0.939 \\
 & $\boldsymbol{\beta}_4$ & -0.0018 & 0.0449 & 0.0459 & 0.938 & 0.0059 & 0.0655 & 0.0667 & 0.947 & -0.0029 & 0.0452 & 0.0446 & 0.945 \\
 & $\boldsymbol{\beta}_5$ & 0.0016 & 0.0740 & 0.0769 & 0.948 & 0.0175 & 0.1031 & 0.1049 & 0.939 & 0.0016 & 0.0697 & 0.0700 & 0.944 \\
\midrule
\multirow{5}{*}{1000}
 & $\boldsymbol{\beta}_1$ & -0.0036 & 0.0882 & 0.0891 & 0.956 & -0.0081 & 0.1166 & 0.1166 & 0.944 & 0.0009 & 0.0792 & 0.0811 & 0.942 \\
 & $\boldsymbol{\beta}_2$ & -0.0092 & 0.1129 & 0.1176 & 0.942 & -0.0029 & 0.1516 & 0.1557 & 0.938 & -0.0051 & 0.1040 & 0.1046 & 0.952 \\
 & $\boldsymbol{\beta}_3$ & -0.0025 & 0.1081 & 0.1080 & 0.956 & -0.0038 & 0.1460 & 0.1476 & 0.947 & -0.0067 & 0.0995 & 0.1025 & 0.940 \\
 & $\boldsymbol{\beta}_4$ & -0.0029 & 0.0402 & 0.0402 & 0.949 & 0.0039 & 0.0586 & 0.0593 & 0.954 & -0.0011 & 0.0403 & 0.0407 & 0.944 \\
 & $\boldsymbol{\beta}_5$ & 0.0028 & 0.0664 & 0.0672 & 0.944 & 0.0103 & 0.0922 & 0.0939 & 0.950 & 0.0025 & 0.0625 & 0.0652 & 0.950 \\
\bottomrule
\end{tabular}
\end{sidewaystable}
\begin{sidewaystable}[h]
\centering
\caption{(b) censoring rate 20\%}
\label{tab:caseII 20}
\begin{tabular}{c c *{12}{c}}
\toprule
\multirow{2}{*}{$r$} & \multirow{2}{*}{$\boldsymbol{\beta}$} & \multicolumn{4}{c}{Lopt} & \multicolumn{4}{c}{UNIF} & \multicolumn{4}{c}{Aopt} \\
\cmidrule(lr){3-6} \cmidrule(lr){7-10} \cmidrule(lr){11-14}
    & & Bias & ESE & SSE & CP
    & Bias & ESE & SSE & CP
    & Bias & ESE & SSE & CP \\
\midrule
\multirow{5}{*}{400}
 & $\boldsymbol{\beta}_1$ & -0.0154 & 0.1360 & 0.1402 & 0.943 & -0.0185 & 0.1474 & 0.1522 & 0.951 & 0.0021 & 0.1208 & 0.1213 & 0.954 \\
 & $\boldsymbol{\beta}_2$ & -0.0024 & 0.0862 & 0.0836 & 0.955 & -0.0042 & 0.0978 & 0.1000 & 0.945 & -0.0004 & 0.0819 & 0.0857 & 0.937 \\
 & $\boldsymbol{\beta}_3$ & -0.0023 & 0.0804 & 0.0832 & 0.948 & -0.0017 & 0.0919 & 0.0968 & 0.942 & 0.0000 & 0.0782 & 0.0825 & 0.932 \\
 & $\boldsymbol{\beta}_4$ & 0.0033 & 0.0939 & 0.0982 & 0.945 & 0.0108 & 0.1059 & 0.1094 & 0.945 & 0.0007 & 0.0886 & 0.0931 & 0.934 \\
 & $\boldsymbol{\beta}_5$ & 0.0083 & 0.1210 & 0.1254 & 0.941 & 0.0125 & 0.1337 & 0.1377 & 0.945 & 0.0002 & 0.1104 & 0.1171 & 0.937 \\
\midrule
\multirow{5}{*}{600}
 & $\boldsymbol{\beta}_1$ & -0.0090 & 0.1111 & 0.1149 & 0.939 & -0.0060 & 0.1192 & 0.1207 & 0.952 & 0.0040 & 0.0985 & 0.0981 & 0.950 \\
 & $\boldsymbol{\beta}_2$ & 0.0011 & 0.0703 & 0.0745 & 0.941 & -0.0070 & 0.0796 & 0.0796 & 0.951 & 0.0037 & 0.0664 & 0.0670 & 0.953 \\
 & $\boldsymbol{\beta}_3$ & -0.0033 & 0.0653 & 0.0637 & 0.956 & 0.0019 & 0.0749 & 0.0807 & 0.924 & -0.0059 & 0.0634 & 0.0654 & 0.944 \\
 & $\boldsymbol{\beta}_4$ & 0.0019 & 0.0766 & 0.0769 & 0.943 & 0.0037 & 0.0867 & 0.0885 & 0.939 & 0.0011 & 0.0724 & 0.0729 & 0.953 \\
 & $\boldsymbol{\beta}_5$ & 0.0053 & 0.0987 & 0.1040 & 0.943 & 0.0094 & 0.1087 & 0.1075 & 0.951 & -0.0014 & 0.0901 & 0.0916 & 0.950 \\
\midrule
\multirow{5}{*}{800}
 & $\boldsymbol{\beta}_1$ & 0.0002 & 0.0962 & 0.1005 & 0.936 & -0.0085 & 0.1041 & 0.1105 & 0.932 & 0.0066 & 0.0852 & 0.0834 & 0.950 \\
 & $\boldsymbol{\beta}_2$ & -0.0015 & 0.0610 & 0.0621 & 0.949 & -0.0037 & 0.0689 & 0.0724 & 0.938 & 0.0015 & 0.0574 & 0.0589 & 0.943 \\
 & $\boldsymbol{\beta}_3$ & 0.0005 & 0.0566 & 0.0596 & 0.938 & -0.0004 & 0.0647 & 0.0660 & 0.954 & -0.0016 & 0.0549 & 0.0583 & 0.933 \\
 & $\boldsymbol{\beta}_4$ & 0.0006 & 0.0665 & 0.0687 & 0.943 & 0.0016 & 0.0748 & 0.0751 & 0.945 & -0.0008 & 0.0627 & 0.0642 & 0.952 \\
 & $\boldsymbol{\beta}_5$ & 0.0009 & 0.0854 & 0.0891 & 0.947 & 0.0116 & 0.0945 & 0.0960 & 0.949 & -0.0031 & 0.0781 & 0.0796 & 0.942 \\
\midrule
\multirow{5}{*}{1000}
 & $\boldsymbol{\beta}_1$ & -0.0013 & 0.0860 & 0.0924 & 0.925 & -0.0066 & 0.0925 & 0.0908 & 0.961 & 0.0039 & 0.0763 & 0.0786 & 0.945 \\
 & $\boldsymbol{\beta}_2$ & -0.0017 & 0.0547 & 0.0565 & 0.943 & 0.0012 & 0.0614 & 0.0613 & 0.955 & 0.0034 & 0.0516 & 0.0530 & 0.941 \\
 & $\boldsymbol{\beta}_3$ & 0.0001 & 0.0506 & 0.0505 & 0.950 & -0.0033 & 0.0579 & 0.0569 & 0.953 & -0.0017 & 0.0491 & 0.0523 & 0.930 \\
 & $\boldsymbol{\beta}_4$ & -0.0003 & 0.0595 & 0.0631 & 0.939 & 0.0044 & 0.0668 & 0.0697 & 0.938 & 0.0002 & 0.0562 & 0.0579 & 0.938 \\
 & $\boldsymbol{\beta}_5$ & -0.0008 & 0.0765 & 0.0789 & 0.948 & 0.0044 & 0.0839 & 0.0864 & 0.944 & -0.0047 & 0.0699 & 0.0688 & 0.957 \\
\bottomrule
\end{tabular}
\end{sidewaystable}
\begin{sidewaystable}[h]
\centering
\caption{(b) censoring rate 50\%}
\label{tab:caseII 50}
\begin{tabular}{c c *{12}{c}}
\toprule
\multirow{2}{*}{$r$} & \multirow{2}{*}{$\boldsymbol{\beta}$} & \multicolumn{4}{c}{Lopt} & \multicolumn{4}{c}{UNIF} & \multicolumn{4}{c}{Aopt} \\
\cmidrule(lr){3-6} \cmidrule(lr){7-10} \cmidrule(lr){11-14}
    & & Bias & ESE & SSE & CP
    & Bias & ESE & SSE & CP
    & Bias & ESE & SSE & CP \\
\midrule
\multirow{5}{*}{400}
 & $\boldsymbol{\beta}_1$ & -0.0066 & 0.1398 & 0.1473 & 0.943 & -0.0271 & 0.1869 & 0.1960 & 0.935 & -0.0027 & 0.1259 & 0.1285 & 0.944 \\
 & $\boldsymbol{\beta}_2$ & 0.0029 & 0.0915 & 0.0970 & 0.928 & -0.0131 & 0.1293 & 0.1373 & 0.933 & -0.0002 & 0.0869 & 0.0872 & 0.949 \\
 & $\boldsymbol{\beta}_3$ & -0.0006 & 0.0871 & 0.0926 & 0.931 & 0.0012 & 0.1241 & 0.1310 & 0.939 & 0.0078 & 0.0832 & 0.0887 & 0.926 \\
 & $\boldsymbol{\beta}_4$ & -0.0044 & 0.0999 & 0.1071 & 0.932 & 0.0133 & 0.1404 & 0.1425 & 0.946 & -0.0102 & 0.0942 & 0.0949 & 0.946 \\
 & $\boldsymbol{\beta}_5$ & 0.0104 & 0.1241 & 0.1306 & 0.937 & 0.0258 & 0.1710 & 0.1755 & 0.942 & 0.0024 & 0.1146 & 0.1201 & 0.944 \\
\midrule
\multirow{5}{*}{600}
 & $\boldsymbol{\beta}_1$ & 0.0015 & 0.1141 & 0.1154 & 0.949 & -0.0228 & 0.1511 & 0.1608 & 0.925 & 0.0034 & 0.1027 & 0.1120 & 0.932 \\
 & $\boldsymbol{\beta}_2$ & -0.0004 & 0.0747 & 0.0790 & 0.936 & -0.0134 & 0.1041 & 0.1089 & 0.925 & 0.0016 & 0.0708 & 0.0716 & 0.951 \\
 & $\boldsymbol{\beta}_3$ & 0.0036 & 0.0711 & 0.0715 & 0.952 & 0.0017 & 0.1004 & 0.1090 & 0.926 & 0.0024 & 0.0675 & 0.0680 & 0.950 \\
 & $\boldsymbol{\beta}_4$ & -0.0027 & 0.0818 & 0.0814 & 0.952 & 0.0139 & 0.1138 & 0.1215 & 0.940 & -0.0044 & 0.0770 & 0.0801 & 0.947 \\
 & $\boldsymbol{\beta}_5$ & 0.0063 & 0.1017 & 0.1082 & 0.936 & 0.0245 & 0.1384 & 0.1415 & 0.941 & 0.0032 & 0.0940 & 0.1017 & 0.942 \\
\midrule
\multirow{5}{*}{800}
 & $\boldsymbol{\beta}_1$ & -0.0022 & 0.0995 & 0.1030 & 0.943 & -0.0194 & 0.1309 & 0.1372 & 0.944 & 0.0038 & 0.0895 & 0.0926 & 0.940 \\
 & $\boldsymbol{\beta}_2$ & -0.0007 & 0.0651 & 0.0695 & 0.931 & -0.0089 & 0.0905 & 0.0944 & 0.928 & -0.0023 & 0.0615 & 0.0634 & 0.947 \\
 & $\boldsymbol{\beta}_3$ & 0.0051 & 0.0620 & 0.0639 & 0.942 & 0.0003 & 0.0871 & 0.0941 & 0.925 & 0.0048 & 0.0586 & 0.0612 & 0.944 \\
 & $\boldsymbol{\beta}_4$ & -0.0051 & 0.0711 & 0.0699 & 0.942 & 0.0052 & 0.0987 & 0.1058 & 0.934 & -0.0061 & 0.0668 & 0.0715 & 0.940 \\
 & $\boldsymbol{\beta}_5$ & 0.0027 & 0.0886 & 0.0969 & 0.927 & 0.0198 & 0.1201 & 0.1270 & 0.937 & 0.0032 & 0.0820 & 0.0869 & 0.937 \\
\midrule
\multirow{5}{*}{1000}
 & $\boldsymbol{\beta}_1$ & -0.0033 & 0.0891 & 0.0948 & 0.938 & -0.0116 & 0.1168 & 0.1227 & 0.941 & 0.0032 & 0.0802 & 0.0824 & 0.945 \\
 & $\boldsymbol{\beta}_2$ & 0.0015 & 0.0582 & 0.0611 & 0.934 & -0.0061 & 0.0805 & 0.0858 & 0.939 & 0.0014 & 0.0552 & 0.0573 & 0.947 \\
 & $\boldsymbol{\beta}_3$ & 0.0014 & 0.0555 & 0.0583 & 0.934 & -0.0017 & 0.0778 & 0.0793 & 0.946 & 0.0042 & 0.0526 & 0.0560 & 0.938 \\
 & $\boldsymbol{\beta}_4$ & -0.0039 & 0.0637 & 0.0663 & 0.945 & 0.0092 & 0.0882 & 0.0925 & 0.940 & -0.0053 & 0.0598 & 0.0634 & 0.935 \\
 & $\boldsymbol{\beta}_5$ & 0.0016 & 0.0795 & 0.0873 & 0.935 & 0.0149 & 0.1067 & 0.1110 & 0.939 & 0.0017 & 0.0737 & 0.0784 & 0.939 \\
\bottomrule
\end{tabular}
\end{sidewaystable}
\section{Airline flight data analysis}
In this section, we empirically evaluate the proposed subsampling methods for Cox regression with length-biased right-censored data using the 2007 records from the Data Expo 2009: Airline on time data \citep{DVN/HG7NV7_2008} archive, which contains flight operation data from U.S. airlines. The original dataset contains 7,453,215 flight records. After excluding canceled flights, diverted flights, observations with missing values for key variables, and records with abnormal operating times, 7,202,596 valid flight records remain. To mimic a length-biased sampling mechanism, we draw 500,000 records with replacement using probabilities proportional to the actual flight operating time $T$, that is, $p_i = T_i / \sum_{j=1}^{N} T_j$. Under this sampling design, flights with longer operating times have higher inclusion probabilities, and the resulting sample is used as the full dataset for the empirical analysis.

Following this selection procedure, for each selected record, we generate $U\sim \mathrm{Uniform}(0,1)$ and construct the entry time $A=UT$ and the residual lifetime $V=T-A$. The censoring time is then artificially set to $C=90$ minutes. Accordingly, we define the observed residual lifetime $V_{\mathrm{obs}}=\min(V,C)$, the observed time $Y=A+V_{\mathrm{obs}}$, and the censoring indicator $\delta=I(V\le C)$. This procedure yields a censoring rate of 33.8\% and produces the length-biased right-censored dataset used for the empirical analysis.

In the Cox model, we include flight type, travel date, season, and scheduled departure period as covariates. Specifically, $x_1$ = 1 and $x_2$ = 1 indicate long-haul flights and weekend flights, respectively. Season is encoded using three dummy variables, with spring as the reference group:  $x_3$ = 1, $x_4$ = 1 and  $x_5$ = 1 indicate that the flight occurs in summer, autumn, and winter, respectively. Similarly, the scheduled departure period is encoded using two dummy variables, with morning as the reference group: $x_6$ = 1 and  $x_7$ = 1 denote afternoon and evening departures, respectively. To evaluate the estimation performance of the subsampling methods, we first fit the Cox model to the full dataset described above. The full-data estimate of the regression coefficient vector is  $\hat{\boldsymbol{\beta}}_{full}=(-2.4313,0.0472,-0.0462,0.0012,-0.0791,-0.0463,-0.0048)^\top.$  The full-sample estimation results indicate that differences in flight operating time are primarily captured by the effect of flight type: long-haul flights generally have longer actual operating times than short-haul flights. In contrast, the effects of variables such as weekend travel, season, and scheduled departure period are relatively limited. Specifically, weekend flights are slightly shorter than weekday flights, winter flights are slightly longer than spring flights, and the departure-period variables exhibit relatively weak statistical effects.

Based on these results, we further compare the performance of different subsampling methods for estimation of the Cox model. Table \ref{tab9} reports the bias, average standard error (SE), and empirical standard error (ESE) for each subsampling method based on 1,000 repeated subsamples. The biases of all methods are close to zero, indicating that the subsampling estimators are generally consistent with the full-data estimates and exhibit no systematic bias. Furthermore, the SE and ESE values are broadly comparable, suggesting that the estimated variances adequately capture the sampling variability of the estimators. A closer comparison reveals that the A-optimal and L-optimal methods yield smaller standard errors than uniform subsampling for most parameters, implying that the optimal subsampling approaches achieve higher estimation efficiency. These findings are broadly consistent with the conclusions drawn from the preceding simulation studies and provide further evidence that A-optimal and L-optimal subsampling outperform uniform subsampling in reducing estimation variability. In terms of computational cost, fitting the Cox model to the full dataset requires 51,230.51 seconds, whereas even the most computationally intensive subsampling method considered, A-optimal, takes only 365.96 seconds with a subsample size of r=1000. This indicates that subsampling strategies can substantially reduce the computational burden associated with full-data Cox model estimation.
\begin{table}[htbp]
\centering
\caption{Estimation results for U.S. airline flight operation data}
\label{tab9}
\begin{tabular}{c c *{9}{c}}
\toprule
\multirow{2}{*}{$r$} & \multirow{2}{*}{$\boldsymbol{\beta}$}
& \multicolumn{3}{c}{L-opt} & \multicolumn{3}{c}{A-opt} & \multicolumn{3}{c}{Uniform} \\
\cmidrule(lr){3-5} \cmidrule(lr){6-8} \cmidrule(lr){9-11}
 & & Bias & ESE & SE & Bias & ESE & SE & Bias & ESE & SE \\
\midrule
\multirow{7}{*}{400}
 & $\boldsymbol{\beta}_1$ & -0.1097 & 0.1798 & 0.2261 & -0.0906 & 0.1707 & 0.2050 & -0.1892 & 0.2166 & 0.3102 \\
 & $\boldsymbol{\beta}_2$ & -0.0016 & 0.1160 & 0.1242 & -0.0021 & 0.1169 & 0.1210 & 0.0038 & 0.1463 & 0.1617 \\
 & $\boldsymbol{\beta}_3$ & 0.0015 & 0.1554 & 0.1656 & 0.0115 & 0.1440 & 0.1580 & 0.0068 & 0.1867 & 0.2101 \\
 & $\boldsymbol{\beta}_4$ & -0.0067 & 0.1548 & 0.1629 & 0.0104 & 0.1487 & 0.1646 & 0.0038 & 0.1845 & 0.1994 \\
 & $\boldsymbol{\beta}_5$ & -0.0081 & 0.1593 & 0.1694 & 0.0102 & 0.1497 & 0.1605 & -0.0072 & 0.1905 & 0.2179 \\
 & $\boldsymbol{\beta}_6$ & -0.0017 & 0.1235 & 0.1378 & -0.0002 & 0.1224 & 0.1335 & 0.0066 & 0.1482 & 0.1720 \\
 & $\boldsymbol{\beta}_7$ & 0.0030 & 0.1422 & 0.1568 & 0.0030 & 0.1400 & 0.1503 & 0.0064 & 0.1787 & 0.2091 \\
\midrule
\multirow{7}{*}{600}
 & $\boldsymbol{\beta}_1$ & -0.0750 & 0.1516 & 0.1829 & -0.0681 & 0.1437 & 0.1623 & -0.1219 & 0.1843 & 0.2427 \\
 & $\boldsymbol{\beta}_2$ & -0.0047 & 0.0955 & 0.1007 & -0.0070 & 0.0963 & 0.1014 & 0.0031 & 0.1214 & 0.1352 \\
 & $\boldsymbol{\beta}_3$ & -0.0030 & 0.1286 & 0.1373 & 0.0054 & 0.1183 & 0.1236 & 0.0004 & 0.1558 & 0.1725 \\
 & $\boldsymbol{\beta}_4$ & -0.0041 & 0.1278 & 0.1369 & 0.0050 & 0.1222 & 0.1271 & -0.0031 & 0.1519 & 0.1652 \\
 & $\boldsymbol{\beta}_5$ & -0.0014 & 0.1313 & 0.1425 & 0.0091 & 0.1225 & 0.1266 & -0.0113 & 0.1574 & 0.1766 \\
 & $\boldsymbol{\beta}_6$ & 0.0036 & 0.1022 & 0.1085 & 0.0087 & 0.1006 & 0.1073 & 0.0048 & 0.1236 & 0.1360 \\
 & $\boldsymbol{\beta}_7$ & 0.0027 & 0.1174 & 0.1262 & 0.0055 & 0.1153 & 0.1241 & -0.0002 & 0.1477 & 0.1597 \\
\midrule
\multirow{7}{*}{800}
 & $\boldsymbol{\beta}_1$ & -0.0528 & 0.1339 & 0.1539 & -0.0489 & 0.1264 & 0.1413 & -0.1043 & 0.1655 & 0.2000 \\
 & $\boldsymbol{\beta}_2$ & -0.0027 & 0.0832 & 0.0872 & -0.0040 & 0.0835 & 0.0860 & -0.0010 & 0.1067 & 0.1157 \\
 & $\boldsymbol{\beta}_3$ & 0.0026 & 0.1127 & 0.1179 & 0.0016 & 0.1032 & 0.1076 & 0.0021 & 0.1364 & 0.1546 \\
 & $\boldsymbol{\beta}_4$ & -0.0002 & 0.1118 & 0.1181 & 0.0023 & 0.1062 & 0.1123 & -0.0048 & 0.1336 & 0.1451 \\
 & $\boldsymbol{\beta}_5$ & 0.0029 & 0.1152 & 0.1202 & 0.0078 & 0.1065 & 0.1104 & -0.0079 & 0.1388 & 0.1458 \\
 & $\boldsymbol{\beta}_6$ & 0.0054 & 0.0891 & 0.0943 & 0.0056 & 0.0874 & 0.0910 & 0.0008 & 0.1087 & 0.1180 \\
 & $\boldsymbol{\beta}_7$ & 0.0004 & 0.1028 & 0.1094 & -0.0028 & 0.1006 & 0.1061 & 0.0111 & 0.1290 & 0.1460 \\
\midrule
\multirow{7}{*}{1000}
 & $\boldsymbol{\beta}_1$ & -0.0397 & 0.1217 & 0.1340 & -0.0359 & 0.1140 & 0.1311 & -0.0828 & 0.1521 & 0.1888 \\
 & $\boldsymbol{\beta}_2$ & -0.0014 & 0.0749 & 0.0784 & 0.0030 & 0.0748 & 0.0793 & -0.0034 & 0.0961 & 0.1049 \\
 & $\boldsymbol{\beta}_3$ & -0.0034 & 0.1017 & 0.1053 & 0.0024 & 0.0926 & 0.0952 & -0.0030 & 0.1235 & 0.1402 \\
 & $\boldsymbol{\beta}_4$ & -0.0019 & 0.1003 & 0.1060 & 0.0037 & 0.0953 & 0.0994 & -0.0114 & 0.1204 & 0.1283 \\
 & $\boldsymbol{\beta}_5$ & -0.0040 & 0.1034 & 0.1057 & 0.0036 & 0.0958 & 0.1010 & -0.0080 & 0.1247 & 0.1347 \\
 & $\boldsymbol{\beta}_6$ & 0.0062 & 0.0802 & 0.0860 & 0.0039 & 0.0786 & 0.0858 & 0.0044 & 0.0980 & 0.1088 \\
 & $\boldsymbol{\beta}_7$ & 0.0040 & 0.0923 & 0.0943 & 0.0055 & 0.0900 & 0.0932 & -0.0005 & 0.1173 & 0.1252 \\
\bottomrule
\end{tabular}
\end{table}
\section{Conclusion} %
In this paper, we have developed a novel and computationally efficient subsampling framework for the Cox proportional hazards model when applied to large-scale length-biased right-censored survival data. The proposed method addresses the growing computational challenges posed by massive survival datasets under complex sampling mechanisms.

Theoretically, we established the consistency and asymptotic normality of the subsample-based estimator, demonstrating that it provides an efficient approximation to the full-data estimator under mild regularity conditions. Furthermore, we derived explicit optimal subsampling probabilities under the A- and L-optimality criteria, which minimize the asymptotic variance of the resulting estimator. These optimal designs enable the subsample estimator to approximate the full-sample estimator with high fidelity while substantially reducing computational overhead.

Numerical simulations were conducted to evaluate the finite-sample performance of the proposed method under various configurations, including different censoring rates, subsample sizes, and covariate settings. The simulation results corroborate our theoretical findings: the optimal subsampling methods, particularly A-optimal and L-optimal, yield nearly unbiased estimates with smaller standard errors than uniform subsampling and estimates close to those obtained from the full-data analysis, while significantly outperforming uniform subsampling in terms of estimation efficiency. An empirical analysis using the large-scale U.S. airline on-time performance dataset further validated the practical utility of the proposed approach. Specifically, subsampling reduced the computational time from over 51,000 seconds for the full-data analysis to less than 370 seconds for the A-optimal method with a subsample size of r=1000, achieving a computational speedup of two orders of magnitude with negligible loss in estimation accuracy.

In summary, the proposed subsampling framework provides a theoretically justified and practically feasible strategy for efficient inference in very large-scale biased survival data. It strikes a favorable balance between statistical accuracy and computational cost, offering a scalable solution for modern survival analysis.

Future work may extend this approach in several directions. First, it can be adapted to settings with missing covariates by integrating subsampling with missing data handling techniques, such as inverse probability weighting or multiple imputation. Second, the proposed methodology can be generalized to other survival models to enhance robustness and broaden applicability. Promising candidates include the accelerated failure time model, models with time-dependent covariates, and frailty models for clustered or heterogeneous survival data. Third, further investigation into adaptive or sequential subsampling strategies that dynamically refine the sampling distribution based on pilot estimates may yield additional gains in efficiency and robustness. These extensions would further enhance the method's applicability and generalizability across a wider range of large-scale survival analysis problems.
\section*{Disclosure statement}
The authors have no relevant financial or non-financial interests to disclose.
\section*{Funding}
Jianhua SHI's research was supported by the Natural Science Foundation of Fujian Province (No. 2025J01969), China; Lin OUYANG's research was supported by the Key Science and Technology Plan Project of Zhangzhou City (No. ZZ20250012), China.
\section*{Data availability statement}
The raw data analyzed in this study are publicly available from the Harvard Dataverse repository at \url{https://doi.org/10.7910/DVN/HG7NV7}.

\bigskip
\appendix
\section{Proofs}
Before proving Theorem \ref{th1}, it is necessary to establish Lemma \ref{lem1}.
\begin{lemma}\label{lem1}\normalsize
    Under Assumptions \ref{ass2} and \ref{ass4}, conditional on the full sample data $\mathcal{D}_n$, as $r \to \infty$, we have
\begin{equation}
    \sup_{\boldsymbol{\beta}} \left\| U^*(\boldsymbol{\beta}) - U(\boldsymbol{\beta}) \right\| \xrightarrow{P} 0,
\end{equation}
where $\xrightarrow{P}$ denotes convergence in probability.
\end{lemma}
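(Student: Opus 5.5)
The plan is to write $U^*(\boldsymbol{\beta})$ as a ratio-type functional of subsample weighted sums, each an unbiased estimator of the full-data counterpart given $\mathcal{D}_n$. Here $U(\boldsymbol{\beta})=n^{-1}\mathbf{U}_{II}(\boldsymbol{\beta})$ is the normalized full-data score. Each sum will be controlled by a conditional variance bound that uses Assumption \ref{ass4}, and the pointwise statements will then be upgraded to uniform ones over the compact set $\Theta$.

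\textbf{Step 1: decompose.} Write
\begin{equation*}
U^*(\boldsymbol{\beta})-U(\boldsymbol{\beta})=\Bigl\{\frac{1}{nr}\sum_{i=1}^r\frac{1}{\pi_i^*}\int_0^\tau\{\mathbf{X}_i^*-\overline{\mathbf{X}}_w(t,\boldsymbol{\beta})\}dN_i^*(t)-U(\boldsymbol{\beta})\Bigr\}-\frac{1}{nr}\sum_{i=1}^r\frac{1}{\pi_i^*}\int_0^\tau\{\overline{\mathbf{X}}_w^*(t,\boldsymbol{\beta})-\overline{\mathbf{X}}_w(t,\boldsymbol{\beta})\}dN_i^*(t).
\end{equation*}
The first bracket is an average of $r$ i.i.d. terms, conditional on $\mathcal{D}_n$, with conditional mean exactly $U(\boldsymbol{\beta})$. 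By Assumption \ref{ass3}, $\overline{\mathbf{X}}_w$ is bounded, so each summand is bounded by $C(n\pi_i)^{-1}$. Hence the conditional variance is at most
\begin{equation*}
r^{-1}\sum_{i=1}^n \pi_i\,(n\pi_i)^{-2}C^2\le r^{-1}C^2\max_i (n\pi_i)^{-1}=O_P(r^{-1}),
\end{equation*}
using Assumption \ref{ass4}. Chebyshev's inequality then gives pointwise convergence to zero in conditional probability. The second term is bounded by
\begin{equation*}
\Bigl\{\frac{1}{nr}\sum_{i=1}^r(\pi_i^*)^{-1}\delta_i^*\Bigr\}\,\sup_t\|\overline{\mathbf{X}}_w^*-\overline{\mathbf{X}}_w\|.
\end{equation*}
The prefactor is $O_P(1)$ by the same argument, since its conditional mean is at most $1$. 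So it suffices to show that $\sup_{t,\boldsymbol{\beta}}\|\overline{\mathbf{X}}_w^*(t,\boldsymbol{\beta})-\overline{\mathbf{X}}_w(t,\boldsymbol{\beta})\|\to0$, which is the claim of the Remark.

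\textbf{Step 2: uniform convergence of $S^{*(l)}$.} For $l=0,1$, show $\sup_{t\in[0,\tau],\boldsymbol{\beta}\in\Theta}\|S^{*(l)}(t,\boldsymbol{\beta})-S^{(l)}(t,\boldsymbol{\beta})\|\to0$ conditionally. Pointwise convergence follows from the same unbiasedness and variance calculation as in Step 1. The summands are bounded because $w_c(t)W_i=w_c(t)/w_c(Y_i)\le1$ on $\{Y_i\ge t\}$, and because $\mathbf{X}_i$ and $\Theta$ are bounded. I will treat $S_c^*$ as $S_c$ (known, or consistently estimated) and flag this in the proof. For uniformity in $\boldsymbol{\beta}$, I will use the Lipschitz property: $\|\partial_{\boldsymbol{\beta}}S^{*(l)}\|\le C\,\frac{1}{nr}\sum_i(\pi_i^*)^{-1}$, which is $O_P(1)$. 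A finite $\epsilon$-net of $\Theta$ then suffices. For uniformity in $t$, each $S^{(l)}$ splits, by the sign of the components of $\mathbf{X}$, into differences of products of $w_c(t)$ and monotone step functions of $t$. A Glivenko--Cantelli bracketing argument on a finite grid of $[0,\tau]$ therefore applies.

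\textbf{Step 3: conclude.} Assumptions \ref{ass1} and \ref{ass2} imply that $S^{(0)}(t,\boldsymbol{\beta})$ is bounded away from zero on $[0,\tau]\times\Theta$ with probability tending to one. The map $(a,b)\mapsto b/a$ is Lipschitz on that region, which yields the uniform convergence of $\overline{\mathbf{X}}_w^*$. Uniformity of the first bracket in Step 1 follows by the same $\epsilon$-net argument, since that bracket is also Lipschitz in $\boldsymbol{\beta}$ with an $O_P(1)$ constant. The two bounds combine to give the lemma. The main obstacle is Step 2: obtaining uniformity in $t$ together with a lower bound on the denominator. If the bracketing argument becomes messy, the fallback is to bound the sup over a grid and control the increments between grid points using monotonicity of $I(Y_j\ge t)$.
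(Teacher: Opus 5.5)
Your argument has the same skeleton as the paper's proof. Your first bracket is the paper's term $I=(\mathbb{P}_{r|\mathcal{D}_n}-\mathbf{P}_{\mathcal{D}_n})\psi_{\boldsymbol{\beta}}$. Your second term is the paper's $II+III$, kept together, which is actually cleaner: the paper's $\mathbf{P}_{\mathcal{D}_n}\psi^*_{\boldsymbol{\beta}}$ integrates a function that depends on the subsample. Both routes then reduce the plug-in error to uniform convergence of $S^{*(l)}$ and hence of $\overline{\mathbf{X}}_w^*$.

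The difference is in the tools. The paper asserts that the classes $\{f^{(k)}_{t,\boldsymbol{\beta}}\}$ and $\{\psi_{\boldsymbol{\beta}}\}$ are Donsker, citing Kosorok and van der Vaart--Wellner, and reads off uniform $O_P(r^{-1/2})$ rates. You instead use three explicit ingredients: the conditional variance bound $r^{-1}C^2\max_i(n\pi_i)^{-1}$, an $\epsilon$-net in $\boldsymbol{\beta}$ with $O_P(1)$ Lipschitz constants, and monotone bracketing in $t$. Your version is self-contained and shows exactly where Assumption 4 enters. It also deals honestly with the fact that the resampled population changes with $n$, a triangular-array point the Donsker citation glosses over. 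It yields only $o_P(1)$ uniformly rather than a rate, but that is all the lemma claims. Treating $S_c^*$ as $S_c$ is also what the paper implicitly does.

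One step is false as written, and the paper's proof makes the same assertion. In Step 3 you claim $S^{(0)}(t,\boldsymbol{\beta})$ is bounded away from zero on $[0,\tau]\times\Theta$. But $w_c(0)=0$, so $S^{(0)}(0,\boldsymbol{\beta})=0$, and $S^{(0)}(t,\boldsymbol{\beta})\to 0$ as $t\downarrow 0$. The normalization $w_c(t)W_i\le 1$ that makes your summands bounded in Step 2 is exactly what destroys the lower bound near the origin. Dropping the factor $w_c(t)$, which cancels in $\overline{\mathbf{X}}_w$, restores the lower bound but leaves the $W_i$ unbounded.

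The repair is easy, because the lemma does not need the sup over all of $[0,\tau]$ (the Remark's claim). Split your second term at a small $\epsilon>0$.
\begin{itemize}
\item On $\{Y_i^*<\epsilon\}$, bound $\|\overline{\mathbf{X}}_w^*-\overline{\mathbf{X}}_w\|\le 2\max_i\|\mathbf{X}_i\|$. The remaining weight $\frac{1}{nr}\sum_{i=1}^r(\pi_i^*)^{-1}\delta_i^*I(Y_i^*<\epsilon)$ has conditional mean $n^{-1}\sum_{i=1}^n\delta_iI(Y_i<\epsilon)$. This tends to $P(\delta=1,Y<\epsilon)$, which is small for small $\epsilon$ since $Y$ has no atom at $0$.
\item On $[\epsilon,\tau]$, $S^{(0)}(t,\boldsymbol{\beta})\ge w_c(\epsilon)\,n^{-1}\sum_iI(Y_i\ge\tau)\delta_iW_ie^{\boldsymbol{\beta}^\top\mathbf{X}_i}$. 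This is bounded below with probability tending to one under the positivity condition of Assumption 1 and boundedness of $\mathbf{X}$ and $\Theta$, so your Steps 2--3 go through there.
\end{itemize}
The restriction also simplifies your bracketing in $t$: for $t\ge\epsilon$, the monotone factors $n^{-1}\sum_iI(Y_i\ge t)\delta_iW_ic_i$ and their subsample analogues are bounded by a constant times $1/w_c(\epsilon)$, the latter with probability tending to one.
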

\begin{proof}[Proof.]
    \normalsize
    For a subsample of size $r$, we define the subsample empirical measure conditional on the full data $\mathcal{D}_n$ as
\begin{equation}
    \mathbb{P}_{r|\mathcal{D}_n} = \frac{1}{r} \sum_{i=1}^{r} \delta_{Z_i^*}.
\end{equation}
where, \(\delta_{Z_i^*}\) is a measure that assigns mass 1 at the point \(Z_i^*\) and 0 elsewhere. The random variable \(Z_i^*\) is independently and identically distributed, drawn from the full data according to probabilities \(\pi_i\). For a measurable function \(f: \mathcal{D}_n \to \mathbb{R}\), we may denote
\begin{equation}
    \mathbb{P}_{r \mid \mathcal{D}_n} f = \frac{1}{r}\sum_{i=1}^r f(Z_i^*).
\end{equation}
Taking expectation conditional on the full data yields
\begin{equation}
    \mathbf{P}_{\mathcal{D}_n} f(Z^*) = E\big\{ f(Z^*) \mid \mathcal{D}_n \big\} = \sum_{i=1}^n \pi_i f(Z_i).
\end{equation}
Define
\begin{equation}
    f^{(k)}_{t,\boldsymbol{\beta}}(Z^*) = \big(n\pi^*\big)^{-1} w_c^*(t)\, R^*(t)\, W^*\{\mathbf{X}^{*\otimes k}\} \exp(\boldsymbol{\beta}^\top \mathbf{X}^*).
\end{equation}
Then,
\begin{equation}
    S^{*(k)}(t, \boldsymbol{\beta}) = \mathbb{P}_{r \mid \mathcal{D}_n} \, f_{t,\boldsymbol{\beta}}^{(k)}.
\end{equation}
By the definition of conditional expectation,
\begin{equation}
\begin{aligned}
\mathbf{P}_{\mathcal{D}_n} f_{t,\boldsymbol{\beta}}^{(k)}
&= \sum_{i=1}^{n} \pi_i f_{t,\boldsymbol{\beta}}^{(k)} \\
&= n^{-1} \sum_{i=1}^{n} w_c(t)\, R_i(t)\, W_i\, \mathbf{X}_i^{\otimes k} \exp\!\left( \boldsymbol{\beta}^{\top} \mathbf{X}_i \right) \\
&= S^{(k)}(t, \boldsymbol{\beta}).
\end{aligned}
\end{equation}
Therefore,
\begin{equation}
S^{*(k)}(t, \boldsymbol{\beta}) - S^{(k)}(t, \boldsymbol{\beta})
= \left( \mathbb{P}_{r \mid \mathcal{D}_n} - \mathbf{P}_{\mathcal{D}_n} \right) f_{t,\boldsymbol{\beta}}^{(k)}.
\end{equation}
According to \cite{kosorok2008introduction}
 and Assumptions \ref{ass3} and \ref{ass4}, we know that $\{(n\pi)^{-1}N(t): t\in[0,\tau]\}$ and $(n\pi)^{-1}w_c(t)R_i(t)W_i\mathbf{X}_i^{\otimes k}\exp(\boldsymbol{\beta}^{\top}\mathbf{X}_i)$ are Donsker classes. Under the condition $\mathcal{D}_n$, the empirical process converges weakly to a Brownian bridge process
\begin{equation}
\sqrt{r}(\mathbb{P}_{r|\mathcal{D}_n}-\mathbf{P}_{\mathcal{D}_n})f_{t,\boldsymbol{\beta}}^{(k)}\Rightarrow \mathbb{G}_{\mathcal{D}_n}^{(k)},
\end{equation}
in $\ell^\infty([0,\tau]\times\Theta)$. Therefore,
\begin{equation}
\sup_{t\in[0,\tau],\boldsymbol{\beta}\in\Theta}|S^{*(k)}(t,\boldsymbol{\beta})-S^{(k)}(t,\boldsymbol{\beta})|=O_p(r^{-1/2}\mid \mathcal{D}_n).
\end{equation}
In particular,
\begin{equation}
\sup_{t,\boldsymbol{\beta}}|S^{*(k)}(t,\boldsymbol{\beta})-S^{(k)}(t,\boldsymbol{\beta})|\xrightarrow{P|\mathcal{D}_n}0.
\end{equation}
Note that $\bar{X}_w^*(t,\boldsymbol{\beta})=\dfrac{S^{*(1)}(t,\boldsymbol{\beta})}{S^{*(0)}(t,\boldsymbol{\beta})}$, and that $S^{*(1)}(t,\boldsymbol{\beta})$ and $S^{*(0)}(t,\boldsymbol{\beta})$ converge uniformly, while $S^{(0)}(t,\boldsymbol{\beta})$ is bounded away from zero, it follows from the Continuous Mapping Theorem that, conditional on $\mathcal{D}_n$ and as $r\to\infty$, we have
\begin{equation}\label{S1}
    \sup_{t\in[0,\tau],\boldsymbol{\beta}\in\Theta}\mid\overline{\mathbf{X}}_w^*(t,\boldsymbol{\beta})-\overline{\mathbf{X}}_w(t,\boldsymbol{\beta})\mid\xrightarrow{P|\mathcal{D}_n}0.
\end{equation}
The subsample score and the whole sample score are respectively defined as
$$U^{*}(\boldsymbol{\beta})=\mathbb{P}_{r|\mathcal{D}_n} {\psi}_{\boldsymbol{\beta}}^{*},\quad U(\boldsymbol{\beta})=\mathbf{P}_{\mathcal{D}_n} {\psi}_{\boldsymbol{\beta}},$$
where
\begin{equation}
\begin{gathered}
\psi_{\boldsymbol{\beta}}^{*}=\frac{1}{n\pi}\int_{0}^{\tau}\{\mathbf{X}-\bar{\mathbf{X}}_{w}^{*}(t,\boldsymbol{\beta})\}dN(t),\\\psi_{\boldsymbol{\beta}}=\frac{1}{n\pi}\int_{0}^{\tau}\{\mathbf{X}-\bar{\mathbf{X}}_{w}(t,\boldsymbol{\beta})\}dN(t).
\end{gathered}
\end{equation}
We now decompose the difference as follows:
\begin{equation}
    \begin{aligned}
   {{U}^{*}}(\boldsymbol{\beta} )-U(\boldsymbol{\beta} )&={{\mathbb{P}}_{r|{{\mathcal{D}}_{n}}}}\psi _{\boldsymbol{\beta}}^{*}-{{\mathbf{P}}_{{\mathcal{D}_{n}}}}{{\psi }_{\boldsymbol{\beta}}} \\
 & =\underbrace{({{\mathbb{P}}_{r|{{\mathcal{D}}_{n}}}}-{{\mathbf{P}}_{{\mathcal{D}_{n}}}}){{\psi }_{\boldsymbol{\beta}}}}_{I}+\underbrace{({{\mathbb{P}}_{r|{{\mathcal{D}}_{n}}}}-{{\mathbf{P}}_{{\mathcal{D}_{n}}}})(\psi _{\boldsymbol{\beta}}^{*}-{{\psi }_{\boldsymbol{\beta}}})}_{II}+\underbrace{{{\mathbf{P}}_{{\mathcal{D}_{n}}}}(\psi _{\boldsymbol{\beta} }^{*}-{{\psi }_{\boldsymbol{\beta}}})}_{III}.
\end{aligned}
\end{equation}
For the first term I, under the full sample $\mathcal{D}_n$, the function $\overline{\mathbf{X}}_w(t,\boldsymbol{\beta}) = \frac{S^{(1)}(t,\boldsymbol{\beta})}{S^{(0)}(t,\boldsymbol{\beta})}$ is non-random. Since the covariate $\mathbf{X}$ is bounded and the denominator $S^{(0)}(t,\boldsymbol{\beta})$ is bounded away from zero, it follows that $\overline{\mathbf{X}}_w(t,\boldsymbol{\beta})$ is continuously differentiable about $\boldsymbol{\beta}$, with uniformly bounded derivative on the compact set $\Theta \times [0,\tau]$.

Therefore, there exists a constant $L > 0$ such that for any $\boldsymbol{\beta}_1, \boldsymbol{\beta}_2 \in \Theta$ and all $t \in [0,\tau]$,
\begin{equation}
\left\| \overline{\mathbf{X}}_w(t,\boldsymbol{\beta}_1) - \overline{\mathbf{X}}_w(t,\boldsymbol{\beta}_2) \right\|
\leq L \left\| \boldsymbol{\beta}_1 - \boldsymbol{\beta}_2 \right\|.
\end{equation}
Then,
\begin{equation}
\begin{aligned}
\left| \psi_{\boldsymbol{\beta}_1}(Z) - \psi_{\boldsymbol{\beta}_2}(Z) \right|
&= \left| \int_{0}^{\tau} \left\{ \overline{\mathbf{X}}_w(t,\boldsymbol{\beta}_2) - \overline{\mathbf{X}}_w(t,\boldsymbol{\beta}_1) \right\} \frac{1}{n\pi} \, dN(t) \right| \\
&\leq \int_{0}^{\tau} \left\| \overline{\mathbf{X}}_w(t,\boldsymbol{\beta}_2) - \overline{\mathbf{X}}_w(t,\boldsymbol{\beta}_1) \right\| \frac{1}{n\pi} \, dN(t) \\
&\leq \left( L \int_{0}^{\tau} \frac{1}{n\pi} \, dN(t) \right) \left\| \boldsymbol{\beta}_1 - \boldsymbol{\beta}_2 \right\| \\
&= M(Z) \left\| \boldsymbol{\beta}_1 - \boldsymbol{\beta}_2 \right\|.
\end{aligned}
\end{equation}
Here, the random variable $M(Z) = \frac{L}{n\pi} N(\tau)$ serves as a Lipschitz constant for the function class. Given the full sample $\mathcal{D}_n$ and Assumption 3, the process $\{(n\pi)^{-1}N(t) : t \in [0, \tau]\}$ is Donsker and has a bounded second moment under the empirical measure $\mathbf{P}_n$. Hence, the second moment of $M(Z)$ exists and is finite, i.e.,
\begin{equation}
\mathbf{P}_n \left[ M(Z)^2 \right] < \infty.
\end{equation}

By empirical process theory \citep{vaartvanderWeakConvergenceEmpirical1996}
, for a parametric function class indexed by $\boldsymbol{\beta}$ over a compact set $\Theta$, if it satisfies a random Lipschitz condition about $\boldsymbol{\beta}$ and admits a square-integrable envelope, then $\mathcal{F}_n = \left\{ \psi_{\boldsymbol{\beta}} : \boldsymbol{\beta} \in \Theta \right\}$ is a Donsker class.

Therefore,
\begin{equation}
\sup_{\boldsymbol{\beta}} \left\| I \right\|
= O_{P \mid \mathcal{D}_n} \left( r^{-1/2} \right)
= o_{P \mid \mathcal{D}_n} (1).
\end{equation}
For the remaining terms, we have
\begin{equation}
\begin{aligned}
\sup_{\boldsymbol{\beta}} \left\| II \right\|
&\leq \sup_{t,\boldsymbol{\beta}} \left\| \overline{\mathbf{X}}^*(t,\boldsymbol{\beta}) - \overline{\mathbf{X}}(t,\boldsymbol{\beta}) \right\|\cdot (\mathbb{P}_{r \mid \mathcal{D}_n} - \mathbf{P}_{\mathcal{D}_n})
\left[ \int_0^\tau \frac{dN(t) }{n\pi} \, \right] \\
&= o_{P \mid \mathcal{D}_n}(1) \cdot O_{P \mid \mathcal{D}_n}(1) \\
&= o_{P \mid \mathcal{D}_n}(1),
\end{aligned}
\end{equation}
and
\begin{equation}
\begin{aligned}
\sup_{\boldsymbol{\beta}} \left\| III \right\|
&\leq \sup_{t,\boldsymbol{\beta}} \left\| \overline{\mathbf{X}}^*(t,\boldsymbol{\beta}) - \overline{\mathbf{X}}(t,\boldsymbol{\beta}) \right\| \cdot \mathbf{P}_{\mathcal{D}_n}
\left[ \int_0^\tau \frac{dN(t)}{n\pi} \right] \\
&= o_{P \mid \mathcal{D}_n}(1) \cdot O(1) \\
&= o_{P \mid \mathcal{D}_n}(1).
\end{aligned}
\end{equation}
Hence,
\begin{equation}
\sup_{\boldsymbol{\beta} \in \Theta}
\left\| U^*(\boldsymbol{\beta}) - U(\boldsymbol{\beta}) \right\|
\leq \sup_{\boldsymbol{\beta}} \left( \left\| I \right\| + \left\| II \right\| + \left\| III \right\| \right)
= o_{P \mid \mathcal{D}_n}(1).
\end{equation}
This completes the proof of Lemma \ref{lem1}.
\end{proof}
\bmhead{Proof of Theorem \ref{th1}}
Note that the full-sample estimator $\hat{\boldsymbol{\beta}}_{full}$ is defined as the unique root of $U(\boldsymbol{\beta}) = 0,$ and since the parameter space $\Theta$ is compact and $U(\boldsymbol{\beta})$ is continuous on $\Theta$, it follows that $\hat{\boldsymbol{\beta}}_{full}$ is a well-separated zero of $U(\boldsymbol{\beta})$. By the general consistency theory for $Z$-estimators \citep{vaartvanderAsymptoticStatistics1998} and Lemma \ref{lem1}, the root of $U^*(\boldsymbol{\beta}) = 0$ converges in conditional probability to $\hat{\boldsymbol{\beta}}_{full}$. Therefore,
\begin{equation}
\hat{\boldsymbol{\beta}}_{sub} \xrightarrow{P \mid \mathcal{D}_n} \hat{\boldsymbol{\beta}}_{full}.
\end{equation}
Applying a first-order Taylor expansion of $U^*(\boldsymbol{\beta})$ around $\hat{\boldsymbol{\beta}}_{full}$, we have
\begin{equation}
0=U^*(\hat{\boldsymbol{\beta}}_{sub})
= U^*(\hat{\boldsymbol{\beta}}_{full})
+ \left. \frac{\partial U^*(\boldsymbol{\beta})}{\partial \boldsymbol{\beta}} \right|_{\boldsymbol{\beta} = \hat{\boldsymbol{\beta}}_{full}}
(\hat{\boldsymbol{\beta}}_{sub} - \hat{\boldsymbol{\beta}}_{full})
+ o_{P \mid \mathcal{D}_n} \left( 1 \right).
\end{equation}
It follows that
\begin{equation}\label{S2}
\sqrt{r}\left( \hat{\boldsymbol{\beta}}_{sub} - \hat{\boldsymbol{\beta}}_{full} \right)
= - \left[ \frac{\partial U^*(\hat{\boldsymbol{\beta}}_{full})}{\partial \boldsymbol{\beta}} \right]^{-1}
\sqrt{r}\, U^*(\hat{\boldsymbol{\beta}}_{full}).
\end{equation}
To proceed, observe that
\begin{equation}\label{S3}
\begin{aligned}
\sqrt{r}\, U^*(\hat{\boldsymbol{\beta}}_{full})
&= \sqrt{r}\, \mathbb{P}_{r \mid \mathcal{D}_n} \, \psi^*_{\hat{\boldsymbol{\beta}}_{full}}-\mathbf{P}_{\mathcal{D}_n}\psi_{\hat{\boldsymbol{\beta}}_{full}} \\
&= \sqrt{r}\left( \mathbb{P}_{r \mid \mathcal{D}_n} - \mathbf{P}_{\mathcal{D}_n} \right)
\psi_{\hat{\boldsymbol{\beta}}_{full}}
+ o_{P \mid \mathcal{D}_n}(r^{-1/2}),
\end{aligned}
\end{equation}
where the first equality holds since $\mathbf{P}_{\mathcal{D}_n}\psi_{\hat{\boldsymbol{\beta}}_{full}}
= U(\hat{\boldsymbol{\beta}}_{full}) = 0,$ and the second equality is given by  \eqref{S1}.

Consequently, from \eqref{S3}, we further obtain
\begin{equation}
\begin{aligned}
U^{*}(\hat{\boldsymbol{\beta}}_{full})
&= \left( \mathbb{P}_{r \mid \mathcal{D}_n} - \mathbf{P}_{\mathcal{D}_n} \right)
\psi_{\hat{\boldsymbol{\beta}}_{full}}
+ o_{P \mid \mathcal{D}_n}(r^{-1/2}) \\
&= \mathbb{P}_{r \mid \mathcal{D}_n} \psi_{\hat{\boldsymbol{\beta}}_{full}}
+ o_{P \mid \mathcal{D}_n}(r^{-1/2}) \\
&= \frac{1}{nr} \sum_{i=1}^{r} \frac{1}{\pi_i^*}
\int_{0}^{\tau} \left\{ \mathbf{X}_i^* - \overline{\mathbf{X}}_w(t,\hat{\boldsymbol{\beta}}_{full}) \right\}
\, dN_i^*(t)
+ o_{P \mid \mathcal{D}_n}(r^{-1/2}).
\end{aligned}
\end{equation}
We first consider the derivative of $U^*(\boldsymbol{\beta})$.
\begin{equation}\label{A27}
\frac{\partial U^*(\boldsymbol{\beta})}{\partial \boldsymbol{\beta}}
= \frac{1}{nr} \sum_{i=1}^r \frac{1}{\pi_i^*}
\int_0^\tau \left[
\frac{S^{(2)}(t,\boldsymbol{\beta})}{S^{(0)}(t,\boldsymbol{\beta})}
- \left( \frac{S^{(1)}(t,\boldsymbol{\beta})}{S^{(0)}(t,\boldsymbol{\beta})} \right)^{\otimes 2}
\right] dN_i^*(t)
+ o_{P \mid \mathcal{D}_n}(r^{-1/2}).
\end{equation}
Conditioning on $\mathcal{D}_n$, the expectation of the first term on the right-hand side of \eqref{A27} is given by
\begin{equation}
\begin{aligned}
&E\left[
\frac{1}{nr} \sum_{i=1}^r \frac{1}{\pi_i^*}
\int_0^\tau \left(
\frac{S^{(2)}(t,\boldsymbol{\beta})}{S^{(0)}(t,\boldsymbol{\beta})}
- \left( \frac{S^{(1)}(t,\boldsymbol{\beta})}{S^{(0)}(t,\boldsymbol{\beta})} \right)^{\otimes 2}
\right) dN_i^*(t)
\,\Big|\, \mathcal{D}_n
\right] \\
&\quad =
\frac{1}{n} \sum_{i=1}^n \int_0^\tau \left(
\frac{S^{(2)}(t,\boldsymbol{\beta})}{S^{(0)}(t,\boldsymbol{\beta})}
- \left( \frac{S^{(1)}(t,\boldsymbol{\beta})}{S^{(0)}(t,\boldsymbol{\beta})} \right)^{\otimes 2}
\right) dN_i(t).
\end{aligned}
\end{equation}
Let $U^*_{j_1 j_2}(\boldsymbol{\beta})$ denote the $(j_1,j_2)$-th component of the above quantity. Under Assumptions \ref{ass3}--\ref{ass4}, we have
\begin{equation}
\begin{aligned}
&E\Bigg[
U^*_{j_1 j_2}(\boldsymbol{\beta})
- \frac{1}{n} \sum_{i=1}^n \int_0^\tau
\left(
\frac{S^{(2)}(t,\boldsymbol{\beta})}{S^{(0)}(t,\boldsymbol{\beta})}
- \left(\frac{S^{(1)}(t,\boldsymbol{\beta})}{S^{(0)}(t,\boldsymbol{\beta})}\right)^{\otimes 2}
\right)_{j_1 j_2}
\, dN_i(t)
\;\Big|\; \mathcal{D}_n
\Bigg]^2 \\
&\leq \frac{1}{r n^2} \sum_{i=1}^n \frac{1}{\pi_i}
\left[\int_0^\tau\left(\frac{S^{(2)}(t,\boldsymbol{\beta})}{S^{(0)}(t,\boldsymbol{\beta})}
- \left(\frac{S^{(1)}(t,\boldsymbol{\beta})}{S^{(0)}(t,\boldsymbol{\beta})}\right)^{\otimes 2}
\right)_{j_1 j_2}
\, dN_i(t)
\right]^2 \\
&= O_{P \mid \mathcal{D}_n}(r^{-1}).
\end{aligned}
\end{equation}
By Markov's inequality, it follows that
\begin{equation}
\frac{\partial U^*(\boldsymbol{\beta})}{\partial \boldsymbol{\beta}}
= \frac{\partial U(\boldsymbol{\beta})}{\partial \boldsymbol{\beta}}
+ O_{P \mid \mathcal{D}_n}(r^{-1/2}).
\end{equation}
Evaluating at $\hat{\boldsymbol{\beta}}_{full}$, we obtain
\begin{equation}\label{S4}
\begin{aligned}
\frac{\partial U^*(\hat{\boldsymbol{\beta}}_{full})}{\partial \boldsymbol{\beta}}
&= \frac{1}{n} \sum_{i=1}^n \int_0^\tau
\left[
\frac{S^{(2)}(t,\hat{\boldsymbol{\beta}}_{full})}{S^{(0)}(t,\hat{\boldsymbol{\beta}}_{full})}
- \left(\frac{S^{(1)}(t,\hat{\boldsymbol{\beta}}_{full})}{S^{(0)}(t,\hat{\boldsymbol{\beta}}_{full})}\right)^{\otimes 2}
\right] dN_i(t)
+ O_{P \mid \mathcal{D}_n}(r^{-1/2}) \\
&= \boldsymbol{\Psi} + O_{P \mid \mathcal{D}_n}(r^{-1/2}).
\end{aligned}
\end{equation}
Finally, since $\mathcal{F}_n$ is a Donsker class, the central limit theorem yields
\begin{equation}\label{S5}
\sqrt{r}\left( \mathbb{P}_{r \mid \mathcal{D}_n} - \mathbf{P}_{\mathcal{D}_n} \right)
\psi_{\hat{\boldsymbol{\beta}}_{full}}
\xrightarrow{d}
N(0,\boldsymbol{\Gamma}),
\end{equation}
where the variance-covariance matrix $\boldsymbol{\Gamma}$ is defined as
\begin{equation}
\begin{aligned}
\boldsymbol{\Gamma}
&= \mathrm{Var}_{\mathbf{P}_{\mathcal{D}_n}} \left( \psi_{\hat{\boldsymbol{\beta}}_{full}} \right) \\
&= E_{\mathbf{P}_{\mathcal{D}_n}} \left[
\left( \psi_{\hat{\boldsymbol{\beta}}_{full}} - E_{\mathbf{P}_{\mathcal{D}_n}}(\psi_{\hat{\boldsymbol{\beta}}_{full}}) \right)
\left( \psi_{\hat{\boldsymbol{\beta}}_{full}} - E_{\mathbf{P}_{\mathcal{D}_n}}(\psi_{\hat{\boldsymbol{\beta}}_{full}}) \right)^\top
\right] \\
&= E_{\mathbf{P}_{\mathcal{D}_n}} \left[
\psi_{\hat{\boldsymbol{\beta}}_{full}} \psi_{\hat{\boldsymbol{\beta}}_{full}}^\top
\right] \\
&= \frac{1}{n^2} \sum_{i=1}^n \frac{1}{\pi_i}
\left[
\int_0^\tau \left\{ \mathbf{X}_i - \overline{\mathbf{X}}_w(t,\hat{\boldsymbol{\beta}}_{full}) \right\}
\, dN_i(t)
\right]^{\otimes 2}.
\end{aligned}
\end{equation}
Substituting \eqref{S4} and \eqref{S5} into \eqref{S2} and applying Slutsky's theorem, we obtain
\begin{equation}
\sqrt{r}\left( \hat{\boldsymbol{\beta}}_{sub} - \hat{\boldsymbol{\beta}}_{full} \right)
\xrightarrow{d}
N(0,\boldsymbol{\Sigma}),
\end{equation}
where
\begin{equation}
\boldsymbol{\Sigma} = \boldsymbol{\Psi}^{-1} \boldsymbol{\Gamma} \boldsymbol{\Psi}^{-1}.
\end{equation}

\bmhead{Proof of Theorem \ref{th2}} 
Based on Theorem \ref{th1}, the asymptotic covariance matrix is $\boldsymbol{\Sigma} = \boldsymbol{\Psi}^{-1} \boldsymbol{\Gamma} \boldsymbol{\Psi}^{-1}$, then the trace of $\boldsymbol{\Sigma}$ can be explicitly expanded as
$$
\begin{aligned}
\mathrm{tr}(\boldsymbol{\Sigma})
&= \mathrm{tr}\Bigg(
\frac{1}{n^{2}r} \sum_{i=1}^{n} \frac{1}{\pi_{i}}
\Psi^{-1}
\Big[\int_{0}^{\tau} (\mathbf{X}_{i} - \overline{\mathbf{X}}_{w}(t,\hat{\boldsymbol{\beta}}_{full})) dN_{i}(t) \Big]^{\otimes 2}
(\Psi^{-1})^\top
\Bigg) \\
&= \frac{1}{r n^{2}} \sum_{i=1}^{n} \frac{1}{\pi_{i}}
\Big\| \Psi^{-1} \int_{0}^{\tau} (\mathbf{X}_{i} - \overline{\mathbf{X}}_{w}(t,\hat{\boldsymbol{\beta}}_{full})) dN_{i}(t) \Big\|^{2} \\
&= \frac{1}{r n^2} \sum_{i=1}^n \pi_i \sum_{i=1}^n \frac{1}{\pi_i}
\Big\| \Psi^{-1} \int_0^\tau (\mathbf{X}_i - \overline{\mathbf{X}}_w(t,\hat{\boldsymbol{\beta}}_{full})) dN_i(t) \Big\|^{2} \\
&\ge \frac{1}{r n^{2}}
\Bigg\{ \sum_{i=1}^{n} \Big\| \Psi^{-1} \int_{0}^{\tau} (\mathbf{X}_{i} - \overline{\mathbf{X}}_{w}(t,\hat{\boldsymbol{\beta}}_{full})) dN_{i}(t) \Big\| \Bigg\}^{2},
\end{aligned}
$$
where the inequality follows from the Cauchy-Schwarz inequality under the constraint that $\sum_{i=1}^{n} \pi_i = 1$. The lower bound is attained if and only if the subsampling probabilities $\pi_i$ satisfy
$$\pi_i = \frac{\left\| \boldsymbol{\Psi}^{-1} \int_{0}^{\tau} (\mathbf{X}_{i} - \overline{\mathbf{X}}_{w}(t,\hat{\boldsymbol{\beta}}_{full})) dN_{i}(t) \right\|}{\sum_{j=1}^n \left\| \boldsymbol{\Psi}^{-1} \int_{0}^{\tau} (\mathbf{X}_{j} - \overline{\mathbf{X}}_{w}(t,\hat{\boldsymbol{\beta}}_{full})) dN_{j}(t) \right\|}, \quad i=1, \dots, n.$$

\bmhead{Proof of Theorem \ref{th3}}  
Similar to the derivation in Theorem \ref{th2}, we have
$$
\begin{aligned}
\mathrm{tr}(\boldsymbol{\Gamma})
&= \mathrm{tr}\Bigg(
\frac{1}{n^{2}r} \sum_{i=1}^{n} \frac{1}{\pi_{i}}
\Big[\int_{0}^{\tau} (\mathbf{X}_{i} - \overline{\mathbf{X}}_{w}(t,\hat{\boldsymbol{\beta}}_{full})) dN_{i}(t) \Big]^{\otimes 2} \Bigg) \\
&= \frac{1}{r n^{2}} \sum_{i=1}^{n} \frac{1}{\pi_{i}}
\Big\|\int_{0}^{\tau} (\mathbf{X}_{i} - \overline{\mathbf{X}}_{w}(t,\hat{\boldsymbol{\beta}}_{full})) dN_{i}(t) \Big\|^{2} \\
&= \frac{1}{r n^2} \sum_{i=1}^n \pi_i \sum_{i=1}^n \frac{1}{\pi_i}
\Big\|\int_0^\tau (\mathbf{X}_i - \overline{\mathbf{X}}_w(t,\hat{\boldsymbol{\beta}}_{full})) dN_i(t) \Big\|^{2} \\
&\ge \frac{1}{r n^{2}}
\Bigg\{ \sum_{i=1}^{n} \Big\|\int_{0}^{\tau} (\mathbf{X}_{i} - \overline{\mathbf{X}}_{w}(t,\hat{\boldsymbol{\beta}}_{full})) dN_{i}(t) \Big\| \Bigg\}^{2}.
\end{aligned}
$$
The lower bound is attained if and only if the subsampling probabilities $\pi_i$ satisfy
$$\pi_i = \frac{\left\| \int_{0}^{\tau} (\mathbf{X}_{i} - \overline{\mathbf{X}}_{w}(t,\hat{\boldsymbol{\beta}}_{full})) dN_{i}(t) \right\|}{\sum_{j=1}^n \left\| \int_{0}^{\tau} (\mathbf{X}_{j} - \overline{\mathbf{X}}_{w}(t,\hat{\boldsymbol{\beta}}_{full})) dN_{j}(t) \right\|}, \quad i=1, \dots, n.$$

Finally, to prove Theorem \ref{th4}, it is necessary to show Lemma \ref{lem2} in the following.
\begin{lemma}\label{lem2}\normalsize
     Under Assumptions \ref{ass1}-\ref{ass3}, if
$$
\sup_{t\in[0,\tau],\,\boldsymbol{\beta}\in\Theta}
\left|
\bar{\mathbf{X}}_w^{*}(t,\boldsymbol{\beta})-\bar{\mathbf{X}}_w(t,\boldsymbol{\beta})
\right|
\xrightarrow{P|\mathcal D_n}0, \text{ and } 
\|\hat{\boldsymbol{\beta}}_{r_0}-\hat{\boldsymbol{\beta}}_{{full}}\|
=O_{P|\mathcal D_n}(r_0^{-1/2}),
$$
then for all $1\le i\le n$,
\begin{equation}
\max_{1\le i\le n}\left|
\frac{\pi_i^{{Lopt,rob}}(\lambda)-\pi_i^{{app,rob}}(\lambda)}
{\pi_i^{{app,rob}}(\lambda)}
\right|
=O_{P|\mathcal D_n,\hat{\boldsymbol{\beta}}_{r_0}}(r_0^{-1/2})
=o_{P|\mathcal D_n,\hat{\boldsymbol{\beta}}_{r_0}}(1),
\end{equation}
and
\begin{equation}
\max_{1\le i\le n}\left|
\frac{\pi_i^{{Aopt,rob}}(\lambda)-\pi_i^{{app,rob}}(\lambda)}
{\pi_i^{{app,rob}}(\lambda)}
\right|
=O_{P|\mathcal D_n,\hat{\boldsymbol{\beta}}_{r_0}}(r_0^{-1/2})
=o_{P|\mathcal D_n,\hat{\boldsymbol{\beta}}_{r_0}}(1).
\end{equation}
\end{lemma}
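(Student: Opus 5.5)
We compare the robustified exact probabilities $\pi_i^{Lopt,rob}(\lambda)=(1-\lambda)\pi_i^{Lopt}+\lambda/n$ with their plug-in versions. Write $g_i(\boldsymbol{\beta},\bar{\mathbf{X}})=\int_0^\tau\{\mathbf{X}_i-\bar{\mathbf{X}}(t,\boldsymbol{\beta})\}dN_i(t)$. Then $\pi_i^{Lopt}=\|g_i(\hat{\boldsymbol{\beta}}_{full},\bar{\mathbf{X}}_w)\|/\sum_j\|g_j(\hat{\boldsymbol{\beta}}_{full},\bar{\mathbf{X}}_w)\|$, and $\pi_i^{app}$ is the same ratio with $(\hat{\boldsymbol{\beta}}_{r_0},\bar{\mathbf{X}}_w^*)$. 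The key observation is that the mixing term gives $\pi_i^{app,rob}(\lambda)\ge\lambda/n$. Hence
\[
\max_i\left|\frac{\pi_i^{Lopt,rob}-\pi_i^{app,rob}}{\pi_i^{app,rob}}\right|\le\frac{1-\lambda}{\lambda}\max_i n\left|\pi_i^{Lopt}-\pi_i^{app}\right|,
\]
so it suffices to bound $n|\pi_i^{Lopt}-\pi_i^{app}|$ uniformly in $i$.

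\textbf{Step 1: uniform bound on the numerators.} For each $i$,
\[
\bigl|\|g_i(\hat{\boldsymbol{\beta}}_{full},\bar{\mathbf{X}}_w)\|-\|g_i(\hat{\boldsymbol{\beta}}_{r_0},\bar{\mathbf{X}}_w^*)\|\bigr|\le N_i(\tau)\sup_t\|\bar{\mathbf{X}}_w(t,\hat{\boldsymbol{\beta}}_{full})-\bar{\mathbf{X}}_w^*(t,\hat{\boldsymbol{\beta}}_{r_0})\|,
\]
with $N_i(\tau)\le1$. Split the supremum by the triangle inequality into
\[
\sup_t\|\bar{\mathbf{X}}_w(t,\hat{\boldsymbol{\beta}}_{full})-\bar{\mathbf{X}}_w(t,\hat{\boldsymbol{\beta}}_{r_0})\|+\sup_{t,\boldsymbol{\beta}}\|\bar{\mathbf{X}}_w-\bar{\mathbf{X}}_w^*\|.
\]
The first term is at most $L\|\hat{\boldsymbol{\beta}}_{r_0}-\hat{\boldsymbol{\beta}}_{full}\|=O_P(r_0^{-1/2})$, by the Lipschitz bound established in the proof of Lemma \ref{lem1}. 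The second term is the assumed uniform consistency. Its rate is $O_P(r^{-1/2})$, from the Donsker argument in Lemma \ref{lem1}, since $\bar{\mathbf{X}}_w^*$ is built from a subsample of size at least $r_0$. Under $r_0\le r$ it is therefore absorbed into $O_P(r_0^{-1/2})$. The resulting bound $\epsilon_{r_0}=O_P(r_0^{-1/2})$ does not depend on $i$. For the A-optimal case, the same argument goes through after multiplying by $\|\Psi^{-1}\|$, which is bounded because of the positive definiteness in Assumption \ref{ass2}.

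\textbf{Step 2: the denominators.} Let $D=n^{-1}\sum_j\|g_j(\hat{\boldsymbol{\beta}}_{full},\bar{\mathbf{X}}_w)\|$ and $D^*$ the plug-in analogue. Step 1 gives $|D-D^*|\le\epsilon_{r_0}$. We also need $D$ bounded away from zero in probability. I would obtain this from Assumption \ref{ass2}: $D$ converges to $E\|g\|$, which is positive because otherwise $g_i=0$ a.s., contradicting a positive definite limit of the information, and in particular a nondegenerate $\Gamma$. Assumption \ref{ass3} gives $\|g_i\|\le 2\sup\|\mathbf{X}\|=:K$.

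\textbf{Step 3: combine.} Writing $n\pi_i^{Lopt}=a_i/D$ and $n\pi_i^{app}=a_i^*/D^*$,
\[
\left|\frac{a_i}{D}-\frac{a_i^*}{D^*}\right|\le\frac{|a_i-a_i^*|}{D^*}+\frac{a_i|D-D^*|}{DD^*}\le\frac{\epsilon_{r_0}}{D^*}+\frac{K\epsilon_{r_0}}{DD^*}=O_P(r_0^{-1/2}),
\]
uniformly in $i$. Together with the opening reduction this proves both displays, and $o_P(1)$ follows as $r_0\to\infty$.

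\textbf{Main obstacle.} The delicate point is Step 2: showing $D$ and $D^*$ are bounded away from zero, and handling the conditional-probability bookkeeping. The assumed consistency of $\bar{\mathbf{X}}^*_w$ is stated only as $o_P$, so the $r_0^{-1/2}$ rate requires the quantitative version from Lemma \ref{lem1}. I would handle this by noting that $\bar{\mathbf{X}}^*_w$ uses the pilot subsample, so its error is also $O_P(r_0^{-1/2})$.
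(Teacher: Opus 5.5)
Your proposal is correct and follows essentially the same route as the paper: the reduction via $\pi_i^{app,rob}(\lambda)\ge\lambda/n$, a triangle-inequality bound on the numerators that combines the pilot rate $\|\hat{\boldsymbol{\beta}}_{r_0}-\hat{\boldsymbol{\beta}}_{full}\|=O_P(r_0^{-1/2})$ with the uniform consistency of $\bar{\mathbf{X}}_w^*$, and the ratio decomposition $|a_i/D-a_i^*/D^*|\le|a_i-a_i^*|/D^*+a_i|D-D^*|/(DD^*)$. You are more careful than the paper on two points it glosses over: the normalizing sums must be bounded \emph{below} by a multiple of $n$ (the paper only writes $B,B^*=O_P(n)$), and the stated $o_P(1)$ consistency of $\bar{\mathbf{X}}_w^*$ must be strengthened to an $O_P(r_0^{-1/2})$ rate from the pilot subsample for the claimed rate to follow.
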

\begin{proof}[Proof.]
\normalsize   
We only prove the L-opt case, the A-opt case follows similarly.
For notational convenience, define
\[
\pi_i^{{Lopt}}
=\frac{A_i(\hat{\boldsymbol{\beta}}_{{full}})}{\sum_{j=1}^nA_j(\hat{\boldsymbol{\beta}}_{{full}})},\text{ and } 
\pi_i^{{app}}
=\frac{A_i^*(\hat{\boldsymbol{\beta}}_{r_0})}{\sum_{j=1}^nA_j^*(\hat{\boldsymbol{\beta}}_{r_0})},
\]
where
\[
A_i(\boldsymbol{\beta})=\Big\|\int_0^\tau\{\mathbf{X}_i-\bar{\mathbf{X}}_w(t,\boldsymbol{\beta})\}\,dN_i(t)\Big\|,\text{ and } 
A_i^*(\boldsymbol{\beta})=\Big\|\int_0^\tau\{\mathbf{X}_i-\bar{\mathbf{X}}_w^*(t,\boldsymbol{\beta})\}\,dN_i(t)\Big\|.
\]
By the triangle inequality,
\begin{equation}
\begin{aligned}
\sup_i |A_i(\hat{\boldsymbol{\beta}}_{{full}})-A_i^*(\hat{\boldsymbol{\beta}}_{r_0})|
&\le
\sup_i |A_i(\hat{\boldsymbol{\beta}}_{{full}})-A_i^*(\hat{\boldsymbol{\beta}}_{{full}})|
+\sup_i |A_i^*(\hat{\boldsymbol{\beta}}_{{full}})-A_i^*(\hat{\boldsymbol{\beta}}_{r_0})| \\
&=o_{P|\mathcal D_n}(1)
+O_{P|\mathcal D_n,\hat{\boldsymbol{\beta}}_{r_0}}
(\|\hat{\boldsymbol{\beta}}_{r_0}-\hat{\boldsymbol{\beta}}_{{full}}\|) \\
&=O_{P|\mathcal D_n,\hat{\boldsymbol{\beta}}_{r_0}}(r_0^{-1/2}).
\end{aligned}
\end{equation}
Let $B=\sum_{j=1}^nA_j(\hat{\boldsymbol{\beta}}_{{full}})$ and
$B^*=\sum_{j=1}^nA_j^*(\hat{\boldsymbol{\beta}}_{r_0})$. Then
\begin{equation}
|B-B^*|\le\sum_{j=1}^n|A_j(\hat{\boldsymbol{\beta}}_{{full}})-A_j^*(\hat{\boldsymbol{\beta}}_{r_0})|
=O_{P|\mathcal D_n,\hat{\boldsymbol{\beta}}_{r_0}}(nr_0^{-1/2}).
\end{equation}
Thus,
\begin{equation}
|\pi_i^{{Lopt}}-\pi_i^{{app}}|\le\frac{|A_i-A_i^*|}{B^*}+\frac{A_i|B-B^*|}{BB^*}.
\end{equation}
Since $B=O_P(n)$, $B^*=O_P(n)$, and $A_i=O_P(1)$, we obtain
\begin{equation}
\max_i|\pi_i^{{Lopt}}-\pi_i^{{app}}|
=O_{P|\mathcal D_n,\hat{\boldsymbol{\beta}}_{r_0}}(n^{-1}r_0^{-1/2}).
\end{equation}
Using $\pi_i^{{rob}}\ge \lambda/n$, we have
\begin{equation}
\begin{aligned}
\max_i\left|\frac{\pi_i^{{Lopt,rob}}-\pi_i^{{app,rob}}}
{\pi_i^{{app,rob}}}\right|
&\le\frac{n}{\lambda}\max_i|\pi_i^{{Lopt}}-\pi_i^{{app}}| \\
&=O_{P|\mathcal D_n,\hat{\boldsymbol{\beta}}_{r_0}}(r_0^{-1/2}).
\end{aligned}
\end{equation}
\end{proof}

\bmhead{Proof of Theorem \ref{th4}} 
According to the standard linear representation of M-estimators, it suffices to show that the feasible score function
$U_r^*(\boldsymbol{\beta};\pi^{{app,rob}})$ is asymptotically equivalent to the theoretical score function
$U_r^*(\boldsymbol{\beta};\pi^{{Lopt,rob}})$ at $\hat{\boldsymbol{\beta}}_{{full}}$. Define
\begin{equation}
\Delta_r
=
\left[
U_r^*(\hat{\boldsymbol{\beta}}_{{full}};\pi^{{app,rob}})
-
U_r^*(\hat{\boldsymbol{\beta}}_{{full}};\pi^{{Lopt,rob}})
\right].
\end{equation}
Let
\begin{equation}
\phi_i^*=\int_0^\tau\{\mathbf{X}_i-\bar{\mathbf{X}}_w^*(t,\hat{\boldsymbol{\beta}}_{{full}})\}\,dN_i(t),
\end{equation}
then
\begin{equation}
\begin{aligned}\label{A45}
\|\Delta_r\| 
&= \frac{1}{nr} \sum_{i=1}^r
\left( \frac{\pi_i^{*{Lopt,rob}} - \pi_i^{*{app,rob}}}{\pi_i^{*{app,rob}}} \right)
\frac{\phi_i^*}{\pi_i^{*{Lopt,rob}}} \\
&\le O_{P|\mathcal{D}_n}(1) \cdot
O_{P|\mathcal{D}_n, \hat{\boldsymbol{\beta}}_{r_0}}(r_0^{-1/2}) \\
&= o_{P|\mathcal{D}_n, \hat{\boldsymbol{\beta}}_{r_0}}(1),
\end{aligned}
\end{equation}
The inequality above is obtained by Lemma \ref{lem2}.

Moreover,
\begin{equation}
\begin{aligned}\label{A46}
\|\tilde{\Gamma}-\Gamma\|&=\left\|\frac{1}{n^2}\sum_{i=1}^n\left(\frac{1}{\pi_i^{{app,rob}}}-\frac{1}{\pi_i^{{Lopt,rob}}}\right)\phi_i^{\otimes2}
\right\| \\&=\left\|\frac{1}{n^2}\sum_{i=1}^n\left(\frac{\pi_i^{{Lopt,rob}}-\pi_i^{{app,rob}}}{\pi_i^{{app,rob}}}\right)\frac{\phi_i^{\otimes2}}
{\pi_i^{{Lopt,rob}}}\right\| \\&\le\max_i\left|\frac{\pi_i^{{Lopt,rob}}-\pi_i^{{app,rob}}}{\pi_i^{{app,rob}}}\right|\cdot\left\|\frac{1}{n^2}
\sum_{i=1}^n\frac{\phi_i^{\otimes2}}{\pi_i^{{Lopt,rob}}}\right\| \\
&=o_{P|\mathcal D_n,\hat{\boldsymbol{\beta}}_{r_0}}(1).
\end{aligned}
\end{equation}
Combining (\ref{A45}) and (\ref{A46}), the proof is complete.
\bmhead{Proof of Corollary \ref{cor1}} 
We first establish the unconditional consistency and convergence rate of the subsample estimator $\tilde{\boldsymbol{\beta}}_{sub}$. By the triangle inequality,
\begin{equation}
\|\tilde{\boldsymbol{\beta}}_{sub} - \boldsymbol{\beta}_0\|
\le \|\tilde{\boldsymbol{\beta}}_{sub} - \hat{\boldsymbol{\beta}}_{full}\|
+ \|\hat{\boldsymbol{\beta}}_{full} - \boldsymbol{\beta}_0\|.
\end{equation}

Combining the conditional error of the subsampling estimator,
$O_{P \mid \mathcal{D}_n, \hat{\boldsymbol{\beta}}_{r_0}}(r^{-1/2})$, with the full-sample error $O_P(n^{-1/2})$, we obtain
\begin{equation}
\|\tilde{\boldsymbol{\beta}}_{sub} - \boldsymbol{\beta}_0\|
= O_{P \mid \mathcal{D}_n, \hat{\boldsymbol{\beta}}_{r_0}}(r^{-1/2})
+ O_P(n^{-1/2}).
\end{equation}
Since $r = o(n)$, it follows from Proposition 2 of \cite{wang2022sampling}
 that
\begin{equation}
\|\tilde{\boldsymbol{\beta}}_{sub} - \boldsymbol{\beta}_0\| = O_P(r^{-1/2}).
\end{equation}
Furthermore,
\begin{equation}
\begin{aligned}
\sqrt{r}(\tilde{\boldsymbol{\beta}}_{sub} - \boldsymbol{\beta}_0)
&= \sqrt{r}(\tilde{\boldsymbol{\beta}}_{sub} - \hat{\boldsymbol{\beta}}_{full})
+ \sqrt{r}(\hat{\boldsymbol{\beta}}_{full} - \boldsymbol{\beta}_0) \\
&= \sqrt{r}(\tilde{\boldsymbol{\beta}}_{sub} - \hat{\boldsymbol{\beta}}_{full})
+ o_P(1).
\end{aligned}
\end{equation}
Therefore, conditional on $\mathcal{D}_n$ and $\hat{\boldsymbol{\beta}}_{r_0}$, the asymptotic distribution of
$\sqrt{r}(\tilde{\boldsymbol{\beta}}_{sub} - \boldsymbol{\beta}_0)$ coincides with that of
$\sqrt{r}(\tilde{\boldsymbol{\beta}}_{sub} - \hat{\boldsymbol{\beta}}_{full})$.

By Proposition 2 of \cite{wang2022sampling}, we have
\begin{equation}
\sqrt{r}(\tilde{\boldsymbol{\beta}}_{sub} - \boldsymbol{\beta}_0)
\xrightarrow{d} N(0,\tilde{\boldsymbol{\Sigma}}).
\end{equation}
This completes the proof of Corollary \ref{cor1}.



\bibliographystyle{apalike}
\bibliography{interactapasample}

\end{document}